\newcommand{\textem}[1]{{\em #1}}
\newtheorem{theorem}{Theorem}
\newtheorem{lemma}{Lemma}
\newtheorem{claim}{Claim}
\newtheorem{proposition}{Proposition}
\newenvironment{proof}{\noindent\textit{Proof.}}{{}\hfill $\Box$\\}
\newcommand{\card}[1]{\left|{#1}\right|}
\newcommand{\ceil}[1]{\left\lceil{#1}\right\rceil}
\newcommand{\paren}[1]{\left({#1}\right)}
\newcommand{\set}[1]{\left\{{#1}\right\}}
\let\d=\delta
\let\g=\gamma
\newcommand{\D}{\mathcal{D}}
\newcommand{\U}{{\cal U}}
\newcommand{\cU}{\mathcal{U}}
\newcommand{\cNE}{\mathcal{E}}
\newcommand{\cF}{\mathcal{F}}
\begin{document}

 \title{Self-Organizing Flows in Social Networks}

\author{
Nidhi Hegde\\
       {Bell Labs France Alcatel-Lucent}\\
       {France}\\
       \texttt{nidhi.hegde@alcatel-lucent.com}
\and
Laurent Massoulié\thanks{Part of this work was done while
at Technicolor.}\\
       {Microsoft Research -- Inria Joint Centre}\\
       {France}\\
       \texttt{laurent.massoulie@inria.fr}
\and
Laurent Viennot\thanks{Supported by the Inria project-team ``Gang'' in the ``LIAFA'' laboratory and by the ``LINCS'' laboratory.}\\
       {Inria -- Paris Diderot University}\\
       {France}\\
       \texttt{laurent.viennot@inria.fr}
}

\date{}

\maketitle

\begin{abstract}
 Social networks offer users new means of accessing information,
 essentially relying on ``social filtering'', i.e. propagation and
 filtering of information by social contacts. The sheer amount of data
 flowing in these networks, combined with the limited budget of
 attention of each user, makes it difficult to ensure that social
 filtering brings relevant content to interested users.  Our
 motivation in this paper is to measure to what extent
 self-organization of a social network results in efficient social
 filtering.
 
 To this end we introduce  {\em flow games}, a simple abstraction that
 models network formation under selfish dynamics, featuring
 user-specific interests and budget of attention. In the context of
 homogeneous user interests, we show that selfish dynamics converge to
 a stable network structure (namely a pure Nash equilibrium) with
 close-to-optimal information dissemination.
 We show that, in contrast, for the more realistic case of heterogeneous
 interests, selfish dynamics may lead to
 information dissemination that can be
 arbitrarily inefficient, as captured by an unbounded ``price of
 anarchy''.
 
 Nevertheless the situation differs when user
 interests exhibit a particular structure, captured by a metric space
 with low doubling dimension. In that case, natural autonomous dynamics
 converge to a stable configuration. Moreover, users obtain all the
 information of interest to them in the corresponding dissemination,
 provided their budget of attention is logarithmic in the size of their
 interest set.
 \end{abstract}

\section{Introduction}
\label{sec:intro}


Information access has been revolutionized by the advent of social networks such as Facebook, Google+ and Twitter. These platforms have brought about the new paradigm of ``social filtering'', whereby one accesses information by ``following'' social contacts.

This is especially true for twitter-like microblogging  social networks.
In such networks the functions of filtering, editing and disseminating news are totally distributed, in contrast to traditional news channels. 
%
The efficiency of social filtering is critically affected by the network topology, as captured by the contact-follower relationships. Today's networks provide recommendations to users for potentially useful contacts to follow, but don't interfere any further with topology formation. In this sense, these networks self-organize, under the selfish decisions of individual users. 

This begs the following question: when does such autonomous and selfish self-organizing topology lead to efficient information dissemination? The answer will in turn indicate under what circumstances self-organization is insufficient, and thus when additional mechanisms, such as incentive schemes, should be introduced.

Two parameters play a key role in this problem. 
On the one hand each user aims to maximize the coverage of the topics
of his interest.
On the other hand, a user pays with his attention: 
filtering interesting information from spam (i.e. information that does
not fall in his topics of interest) incurs a cost.
Users must therefore trade-off topic coverage against attention cost.
As pointed out by Simon~\cite{Simon:Organizations}, as information
becomes abundant, another resource becomes scarce: attention.

Furthermore, there is an interplay between participants in a social network where
filtering by one user may benefit another, inducing complex
dependencies in decisions on creating connections. 
To model this, we introduce a network formation game
called \emph{flow game} where 
some users produce news about specific topics and each user is interested in
receiving all news about a set of topics specific to him.
Each user is a selfish agent that can choose his incoming connections within
a certain budget of attention in order to maximize the coverage
of his set of topics of interest. 

This model is of interest on its own, as it enriches the class of existing
network formation games with a focus on flow dissemination under
bounded connections.
This model could also be of interest in the context of peer-to-peer
streaming and file sharing or publish/subscribe applications.

\subsection{Our results}

An important feature in our model is a user's budget of attention
for the consumption of content.  In previous
work~\cite{Jiang2013:BudgetofAttention} the budget of attention was
modelled as a limit on the rate with which a user consults a friend,
with a different objective of minimizing delay in receiving all content.
In the present work we are interested in a more fundamental question,
of how efficient social networks are formed in the first place.  We
consider the model where users are interested in specific subsets of topics
and their objective is to maximize the number of flows received corresponding
to these topics.   
As such, we  model the budget of attention as a constraint on the
number of connections a user may create (rather than a rate of consultation). 
Our aim is to build a simple model capturing the complexity of the problem.
This way of capturing the budget of attention amounts to assuming that
each connection consumes the same amount of attention. We discuss in 
Section~\ref{sec:costs} 
how we can tweak our model to more finely model attention consumption.

We capture users' interests in topics through user-specific values for
each topic and define the \emph{utility} a user receives to be the sum of values
of all received topics.  Each user's objective in a \emph{flow game} is
then to choose connections so as to maximize his utility.
We additionally assume that a user may produce news about one topic at most
even if he redistributes other topics. This is coherent with an
empirical study of twitter traces~\cite{DBLP:conf/icwsm/ChaHBG10} where it
is shown that ordinary users (as opposed to celebrities or newspapers) can 
gain influence by concentrating on a single topic.

Our main results relate to the stability and efficiency of the
formation of information flows.  We derive conditions where selfish dynamics 
converge to a pure Nash equilibrium.
We then give approximation ratios bounding the quality of
an equilibrium compared to an optimal solution.
This is traditionally measured through the price of anarchy,
the ratio of the global welfare (measured as the sum of
user utilities) at an optimal solution compared to that at 
the worst equilibrium.

More precisely, we first consider homogeneous games where all users are
interested in the same set of topics. We can then prove that selfish dynamics
always converge to an equilibrium. Selfish dynamics comprise of any sequence of
moves, where in each move a user is given the opportunity to selfishly rewire his
connections to increase his utility. We show moreover that convergence occurs
within a polynomial number of rounds where a
round is a sequence of selfish moves including at least one move per user.  
We additionally show that the price of anarchy is bounded and
approaches 1 as the budget of attention of users increases.

In the more general case where users interests are heterogeneous, selfish
dynamics may not converge and price of anarchy may be unbounded.  
However, we observe
that fast convergence towards efficient configurations can occur when users'
interests are captured by a metric space with sufficient structure.  Here, the
interests of a user are modeled a point in this space such that nearby
topics are of interest to the user.  Sufficient structure typically arises
when the metric space is a Euclidean space with low dimension. Our results are
tailored to the more general case of metrics with low doubling dimension.  Low
dimension assumptions are classically used in information retrieval when data
can be viewed as a matrix which is approximated with a low rank matrix. For
example, a ranking technique for the web is proposed in~\cite{haveliwala2003}
using a 16-dimensional space for representing topics of web pages. Closer to the
context of our study, modeling people's opinions as points in a low Euclidean
space is a classical approach in social sciences. Political spectrum for example
is often modeled as a one dimensional space along a Left-Right axis. 
consist in introducing more dimensions. 
This concept can be formalized with single-peaked
preference curves~\cite{duncan}. An online system for exchanging political views
could be a concrete example where the technical conditions of our model are met.
We believe that the same applies for the various domains of interest of a user,
implying that our model remains valid more broadly if we can attribute several
points of interest (one per domain) to each user.  An extension of our model in
that direction is proposed.

\subsection{Related work}
\label{sec:rel}

Information spread in networks has been studied extensively. 
Much of the past work study the properties of information
diffusion on given networks with given sharing protocols.  Our goal in
this work is to study how networks form when users create connections
with the objective of efficient content dissemination 
in a game-theoretical approach.
This work thus follows the large amount of work in network 
formation games. However, to the best of our knowledge, the
objective of efficient information dissemination under edge
constraints and interest sets that we
consider here is novel.  We now discuss some work in those domains
that are most relevant to this paper.

A lot of attention has been given to simple models of diffusion
in social networks
such as ``rumor spreading'' or ``cascading'' 
where a piece of information interests all
users and is propagated in the network through 
random interactions typically (see e.g. the related work mentioned 
in~\cite{Giakkoupis2014} for rumor spreading and~\cite{kempe2003}
for cascading). 
In this paper, we are more interested in the selective propagation
of information according to connections chosen locally for optimizing
the coverage of personal interests.

Network formation games have been considered in previous work in
economics and in the context of the formation of Internet peering
relations and peer-to-peer overlay networks.  Economic models of
network formation~\cite{Jackson2010_SocialEconomicNets} use edges to
represent social relations and it is typically assumed that the
creation of an edge needs bilateral agreement since both users benefit
from an edge.  
Our model is oriented and unilateral agreement is more relevant to the
notion of \emph{following} in social networks.  A non-cooperative
one-way link connection game has been considered in previous
work~\cite{Bala2000:NoncoopNetFormation}, where each created link
incurs a cost and users are interested in connecting to all other
users.  Our model is richer and more realistic where we consider connections to
subsets of information flows that hold user-specific intrinsic values.

Network creation games in the context of the Internet have been
considered~\cite{Nisan07_AGT}, where distributed formation of
undirected edges with a linear cost on each edge formed is studied.
In such games, each user's objective is to minimize total formation
cost while either minimizing distance to all other users
\cite{Fabrikant03_NetworkCreation}, or ensuring connection to a given
subset of nodes~\cite{Anshelevich:2004:PoSNetworkDesign}.   We
consider a bound on edge costs, in the form of a limit on the number
of in-edges at each node, and further, we focus on connections that allow
specific flows of information.

Interestingly, bounded budget network formation games have already been
considered.  Bounded budget connection
games~\cite{Laoutaris:2008:BBCGames} consider a bound on each user's
budget in creating edges, with the objective being the minimization of
the sum of weighted distances to other nodes.  A similar model is
considered in~\cite{Bei:2011:BoundedBetweenness} where each user's
objective is to maximize his influence, measured using betweenness
centrality.  In our work however, rather than minimizing distance to
any node, we consider a formation game with the objective of ensuring
connections to a subset of flows of interest, without regard to the
particular nodes.


The notion of connecting to users that can provide a content flow of
interest is similar to peer-to-peer live streaming
systems~\cite{Massoulie08_p2p}.  Unlike peer-to-peer streaming, 
we do not aim to satisfy flow rates, rather our aim is to connect
to as many sets of relevant flows as possible.  Moreover, our model
allows differing user interests.
%
%
The stability of connecting users of a peer-to-peer network according
to some affinities between users was studied 
using b-matching and acyclic preference
systems~\cite{Gai07_acyclic}. As a generalization of the stable
marriages problem, those systems consider configurations of undirected
edges based on mutual acceptance of an edge, whereas unilateral decision
is more suitable in our model. 
Our model is more intricate in the sense
that connections are based not only on preferences 
but also on complementarity of content obtained through various connections.

Most notably, a model similar to ours has been independently developed 
in~\cite{augustin2014}. The authors propose a model for explaining how
social media can provide efficient filtering of information. They model
online exchange of media information with three type of actors: official
media sites that provide fresh news every day, bloggers that relay some of these
news and users that access these news through bloggers. The model also includes
a game-theoretical part where the players are the bloggers. 
The strategy of a blogger 
is the set of news he decides to relay and his utility is the number of users
following him. Conversely to our model, this is not a network formation
game. However, it could be interesting to see if both approaches can be mixed
together for modeling multi-hop relaying where bloggers can also 
relay other bloggers.

In Section~\ref{sec:doubling} we model the space of user interests by
a metric space with low doubling dimension. Modeling interests of users
through a metric space seems a natural approach and bounded growth
metrics, or more generally doubling metrics, have shown to 
be very a general model~\cite{DBLP:journals/mst/PlaxtonRR99}
that can capture general situations, while still providing an algorithmic 
perspective.
The doubling dimension extends the
notion of dimension from Euclidean spaces to arbitrary metric
spaces. It has proven to be useful in many application domains such as 
nearest neighbor queries to databases~\cite{clarkson1999nearest},
network construction~\cite{DBLP:conf/soda/AbrahamMD04},
closest server selction~\cite{DBLP:conf/stoc/KargerR02}, etc.
Doubling metrics have notably been used to model distances in
networks such as Internet~\cite{flv08}.

\subsection{Organization of the paper}
Section~\ref{sec:model} introduces the model.  We study the case of
homogeneous interests in Section~\ref{sec:hom}.  
The heterogeneous case in its full
generality is considered in Section~\ref{sec:het} which details some
negative results.  Section~\ref{sec:doubling} is dedicated to the
specific scenario where users' interests are captured by a doubling
metric, enabling some positive results.
Section~\ref{sec:costs} presents how the costs of attention can be better
modeled with respect to the intersection of user interests.
We finally conclude in
Section~\ref{sec:conc} describing potential extensions of the current
work.

\section{Model}
\label{sec:model}

We consider a social network where users interested in some set of
content topics (or subjects) connect to (or \emph{follow} in social
networking parlance) other users in order to obtain such contents,
materialized by flows of news.  Each user may produce news for at most one
topic (but may forward news from other topics she is interested in). 
To distinguish the role of publisher from that of follower, we technically
assume that news concerning a given topic (or subject) are produced at
a given node called producer which is identified with that topic.

A \emph{flow} game is defined as a tuple $(V,P,S,\Delta)$ where
$V$ is a set of users, $P$ a set of producers (or subjects or topics)
and $S:V\rightarrow P$ is a function associating to each user $u$ 
its interest set $S_u\subseteq P$, and $\Delta:V\rightarrow \mathbb{N}$
is a function
associating to each user $u$ its budget of attention $\Delta_u$.
We let $n=\card{V}$ and $p=\card{P}$ denote the number of users and
producers respectively.
A flow game is \emph{homogeneous} if all users have the same interest
set: $S_u=P$ for all $u\in V$. If this is not the case, the game
is said to be \emph{heterogeneous}.

A strategy for user $u$ is a subset $F_u$ of $\set{(v,u) : v\in V\cup P}$
such that $\card{F_u}\le \Delta_u$
($\Delta_u$ is an upper bound on the in-degree of $u$ that we call the
\emph{budget} of $u$).
For all $(v,u)\in F_u$, we say that $u$ \emph{follows} $v$ or equivalently
that $u$ is connected to $v$ (such a link $(v,u)$ created by $u$ 
is oriented according to the
data flow, that is from $v$ to $u$). The collection $F=\set{F_u : u\in V}$ forms a
network defined by the directed graph $G(F)=(V\cup P,E(F))$ where
$E(F)=\cup_{u\in V}F_u$. A user $u$ is \emph{interested} in a subject $s$
if $s\in S_u$. A user $u$ \emph{receives} a subject $s\in P$
if there exists a directed path from $s$ to $u$ in $G(F)$
such that all intermediate nodes are interested in $s$.
We allow for a natural \emph{filtering} mechanism, where a user retransmits only subjects 
she is interested in. For a given configuration, we let $R(u)$ 
denote the set of subjects received by $u$.
The utility $U_u(F)$ for user $u$ is the number of subjects in $S_u$ she
receives, that is $U_u(F)=|R(u)\cap S_u|$. 
The utility of $u$ is maximized if $U_u(F)=\card{S_u}$.

We denote by \emph{move}, a shift from a collection $F$ of strategies
to a collection $F'$ where a single user $u$ changes her strategy from
a set $F_u$ to another $F_u'$. (We say that $u$ rewires her connections.)
The move is \emph{selfish} if $U_u(F')>U_u(F)$.
\emph{Selfish dynamics} (or dynamics for short) are the sequences
of selfish moves. We say that dynamics \emph{converge} if any sequence
of selfish moves is necessarily finite.
The network is at equilibrium (or stable) if no selfish move is possible.
In standard game-theoretic terminology, this corresponds to a pure
Nash equilibrium. 
The \emph{global welfare} of the system
is defined as the overall system utility: $\cU=\sum_{u \in V} U_u$.
The  efficiency of selfish, self-organization of a game is classically 
captured by the notion of price of anarchy defined as the ratio
of the optimal global welfare over the global welfare
of the worst equilibrium: $\text{PoA}= \frac{\max_{F\in \cF}\sum_{u\in V}U_u(F)}{\min_{F \in \cNE}\sum_{u\in V}U_u(F)}$,
where $\cF$ denotes the set of possible collection of strategies and
$\cNE\subseteq \cF$ denotes the set of equilibria.

In some of our proofs we make use of the notion of potential functions.  An ordinal (or general~\cite{DBLP:conf/stoc/FabrikantPT04}) potential
function~\cite{DBLP:bibsonomy_Monderer:1996p121} is a function
$f:\cF\rightarrow\mathbb{R}$ such that
$\mathop{sign}(f(F')-f(F))=\mathop{sign}(U_u(F')-U_u(F))$ for any move
from $F$ to $F'$ where user $u$ changes her strategy.  If
$f(F')-f(F)=U_u(F')-U_u(F)$, $f$ is called an exact potential
function.  This notion was introduced by Monderer and
Shapley~\cite{DBLP:bibsonomy_Monderer:1996p121} who show that it is
tightly related to the notion of a congestion game~\cite{rosenthal}.
The use of potential functions is a standard technique 
to show convergence of dynamics
and to bound price of
anarchy~\cite{DBLP:conf/stoc/FabrikantPT04,Roughgarden:2006p111}.

\section{Homogeneous interests}
\label{sec:hom}

We first consider the case where 
identical sets of interests, $S_u=P$, for all $u\in V(G)$.  In this context, we
first analyze how to achieve optimal global welfare before establishing an upper
bound on the price of anarchy. We will then show convergence of dynamics
and provide a polynomial bound on convergence time.



\subsection{Optimal Utility and Optimal Global Welfare}
\label{sub:welfare}

We first analyze what is the optimal utility, i.e. the maximum utility a user
can get, and compare it to the optimal global welfare, i.e. the sum of user
utilities obtained
under an optimal centrally designed configuration.  

First consider the maximum utility a given user $u$ can get. Clearly,
$u$ cannot achieve utility larger than $p$, which corresponds to
obtaining all the subjects in $P$. Moreover, he cannot obtain more
subjects than the aggregate budget of attention of all users, that is
$\sum_{u\in V(G)}\Delta_u = n\overline{\Delta}$, where $\overline{\Delta}$ 
is the average in-degree per node.
More precisely, when a user $v$ receives subjects obtained by a user
$w$, some link must connect $w$ to $v$ or to a user $v$ is connected to by 
some path.
Overall, at least one link per user $w\not= u$ must be consumed 
with a user-to-user connection and cannot be used to retrieve a subject
from a producer. 
In a configuration where $u$ receives 
a maximum number of subjects, his utility is thus
at most $(\sum_{u\in V(G)}\Delta_u)-(n-1) =1+\sum_{u\in V(G)}(\Delta_u-1)$. 
Note that this bound is achieved in a singly linked chain configuration
where users are placed along a chain ($u$ being the last node) 
and where each user follows the previous user in the
chain and use remaining connections to follow producers. 
We thus get the following claim.

\begin{claim}
\label{claim:single}
In an homogeneous flow game with $p$ producers and
$n$ users with average budget $\overline\Delta$, 
the optimal utility
a given user can get among all configurations is 
$U^* = \min\left(p, 1 + n(\overline{\Delta}-1)\right)$.
\end{claim}

On the other hand, all users can receive the same set of 
$\min\left(p, n(\overline{\Delta}-1)\right) \ge U^*-1$ 
subjects in a ring configuration
which consists in
forming an oriented ring between users and connecting all remaining
connections to pairwise distinct producers as depicted
by Figure~\ref{fig:homring}. This shows that the average utility a user
gets at optimal global welfare equals the optimal utility $U^*$ up to one.

\begin{figure}[htbp]
\centering
\includegraphics[width=.6\linewidth]{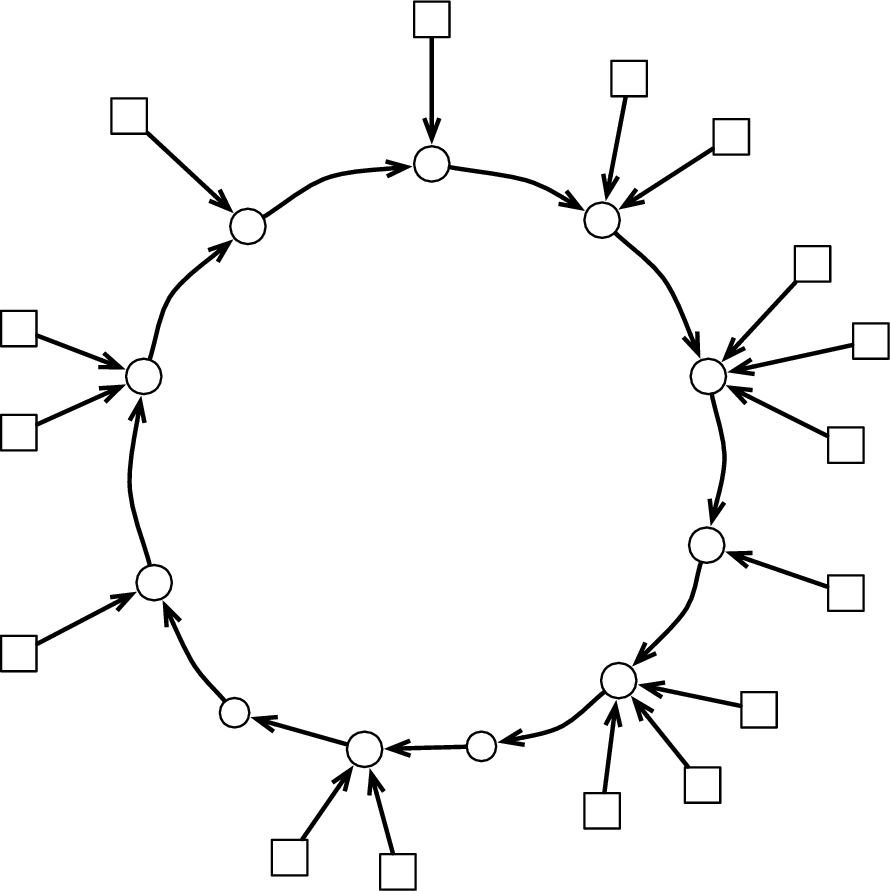}
\caption{An optimal configuration in the homogeneous case (circles and squares
  represent users and producers respectively and edges point in the direction of
  content transfer, as in all figures throughout the paper). It is called a
  \emph{ring} configuration as users form an oriented
  ring.}
\label{fig:homring}
\end{figure}

We will prove that optimal global welfare is generally obtained with 
a ring configuration. However, the singly linked chain can be optimal
in some special cases. For example,
in a flow game with $n=2$ users having budget 2, a singly linked
chain achieves a global welfare of 5 compared to 4 for the ring. With three
users, both configurations achieve a global welfare of 9. 
More generally, we prove the following.

\begin{claim}
\label{claim:opt}
Consider an homogeneous flow game with $n\ge 3$ users
where each user has budget 2 or more 
and where the average budget $\overline\Delta$ is less than $\frac{p}{n}+1$
(equivalently $p\ge 1 + n(\overline{\Delta}-1)$).
The optimal global welfare is that of a ring configuration that is
$n^2(\overline{\Delta}-1)=n(U^*-1)$.
\end{claim}

\begin{proof}
The condition on average budget can be written
$p > n(\overline{\Delta}-1)$. We first assume $p>n\overline{\Delta}$
as this implies that there always exists some producer
not received by a user as $n\overline{\Delta}$ is a clear upper-bound of the
number of subjects that can be globally retrieved by the users.
Consider a configuration providing maximum global welfare.

The graph of user-to-user connections must be connected. If there were two
different connected components $a$ and $b$, we could find a user $v$ in $b$
following some producer $x$. We could then rewire the producer-to-user
connections in $a$ so that some node $u$ in $a$ receives $x$ plus some subject
$y$ not received by $v$ while preserving the utility of each user (as nodes in
$a$ have budget at least 2, they receive at least two subjects).  Rewiring
$(x,v)$ into $(u,v)$ would then increase the global welfare of 1 at least.

Now consider the strongly connected components of this graph resulting
from user-to-user connections: we call component a maximal set $W$
of users such that there is a path from $s$ to $t$ for all $s,t\in W$.
Each component must be a singleton or a ring.
This is because any strongly connected graph with a minimum number of links
must be a ring (one link per user is necessary if the graph has two nodes or
more, and it is sufficient only when these links form an oriented ring).
If a non-singleton component was not a ring, we could 
rewire its internal connections to form a ring
and use the saved connections to follow more producers and increase
the global welfare. 

We first show that if one component is a ring, then it is the only component.
We show this by proving that there cannot exist
a connection between a ring component and any other component.
If a ring $b$ is connected to a ring $a$, then the two rings
can be merged, allowing nodes in $a$ to receive subjects received in $b$
and not in $a$ previously.
If a singleton component $\set{v}$ is connected to a 
ring $a$, then integrating $v$ in the ring allows all users in $a$ to
receive the other subjects received by $v$.
Finally, suppose a ring component $b$ 
is connected to a singleton component $\set{u}$
through a connection $(u,v)$ where $v$ is a node in $b$.
As $u$ has budget at least 2 and connects to at most one component,
it has at least one connection to a producer $x$.
This connection can be used to integrate $u$ in the ring of $b$ while saving
the previous ring connection of $v$. (If the node preceding $v$ in the ring
is $w$, we replace $(x,u)$ by $(w,u)$ and release connection $(w,v)$.) 
We can thus connect $v$ to $x$ and
get a configuration with higher welfare since $u$ now receives more subjects.

We now consider the special case where all components are singletons.  If a user
$a$ was followed by $b$ and $c$, we could increase the global welfare by letting
$c$ follow $b$ instead of $a$. In such an optimal configuration, each user is
followed by at most one user. The graph of user-to-user connections is thus a
tree. Highest utility is obtained when the users follow pairwise disjoint sets
of producers.  If a node $a$ follows $b$ and $c$, it then more profitable to let
a leaf of the sub-tree rooted at $b$ follow $c$ instead of some producer $x$ and
let $a$ follow $x$ instead of $c$. The tree must thus be a singly linked
chain. Such a configuration cannot achieve global welfare higher than that of a
ring configuration for $n\ge 3$ and nodes of budget at least 2.

Finally consider the case where $p > n(\overline{\Delta}-1)$ and 
$p\le n\overline{\Delta}$.  Consider the modified 
flow game with the same set of users (with the same budgets) and with
$p'=p+n>n\overline\Delta$ producers ($n$ more producers are added).
Any optimal configuration for the modified game can be transformed
into a configuration for the original game since 
$n(\overline{\Delta}-1) < p$ producers at most get connected.
The optimal configuration we consider in the original flow game must thus 
achieve a global welfare as high as what can achieved in the modified game.
It is thus optimal in the modified game also. We can thus again conclude
that the optimal global welfare is that of a ring, that is 
$n^2(\overline{\Delta}-1)=n(U^*-1)$.
\end{proof}

Interestingly, we have indeed shown that for $n\ge 4$, the ring is the only
optimal configuration when users have budget at least 2.
When some users have budget 1 and yet 4 or more nodes have budget at least 2, 
any optimal configuration still has a ring structure. 
As a connection to a budget 1 user is
equivalent to a connection to the node he follows, we can virtually
erase budget 1 users as transparent forwarding nodes along connections
between users with budget at least 2 and producers. The virtual configuration
must be optimal also and must be a ring. However,
there are several ways budget 1 users can be connected to other users. 
They can be
connected to any node of the ring, or form trees connected to some nodes
of the ring, or form chains that replace some links of the ring. (A singly
linked ring is one possible optimal configuration.)

\subsection{Price of Anarchy}
\label{sub:poa}

We now consider a distributed setting where each user selfishly
rewires his incoming connections if he can improve his utility,
i.e., if this allows him to receive more subjects.
The following proposition shows that with homogeneous user interests
and budget of attention at least 3, 
self organization is efficient if dynamics converge, 
achieving a price of anarchy close to $1$.

\begin{proposition}
\label{prop:nash_utility}
Assume that $3\le \Delta_u< p$ for every user $u \in V$ of
a homogeneous flow game with $n\ge 3$ users. Then under any
equilibrium the utility of a user is at least 
$\frac{\overline{\Delta}-2}{\overline{\Delta}-1} (U^*-1)$ 
where $U^*$ is his optimal
utility. 
The price of anarchy
is $1+1/(\overline{\Delta}-2)$ at most, approaching 1 for large
$\overline{\Delta}$.
\end{proposition}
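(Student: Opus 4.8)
The plan is to work entirely in terms of reachability. Since the game is homogeneous, every user is interested in every subject and therefore retransmits everything, so filtering never blocks a flow: a user $u$ receives subject $s$ exactly when there is a directed path from the producer $s$ to $u$ in $G(F)$. Writing $R_u$ for the set of producers reaching $u$, we thus have $U_u(F)=\card{R_u}$. If $U_u=p$ for the user under consideration we are done, since $U^*\le p$; so I would assume $U_u<p$. In that regime the equilibrium condition already forces some structure: if $u$ had a spare connection, or an in-edge providing no subject not supplied by its other in-edges, $u$ could redirect it to a producer of a missing subject (one exists because $U_u<p$) and strictly gain. Hence at equilibrium $u$ uses its full budget $\Delta_u$ and each of its in-edges carries a nonempty ``private'' set of subjects, these private sets being pairwise disjoint subsets of $R_u$.

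The engine of the proof is a single-swap inequality. Because the private sets partition a subset of $R_u$ into $\Delta_u$ nonempty pieces, the smallest one has size at most $U_u/\Delta_u$; call its edge $e$. For an arbitrary user $v$, consider the deviation in which $u$ replaces $e$ by an edge following $v$. The delicate point, which I would treat explicitly, is the self-dependency: following $v$ need not hand $u$ the whole of $R_v$, since part of $R_v$ might reach $v$ only through $u$. However, any subject in $R_v\setminus R_u$ does \emph{not} reach $u$, so no path witnessing it can pass through $u$; such a subject therefore still reaches $v$ after the rewiring and is genuinely acquired by $u$. The post-deviation reach thus contains the disjoint union $(R_u\setminus\mathrm{private}(e))\cup(R_v\setminus R_u)$, and stability forces its size to be at most $U_u$, giving the key bound $\card{R_v\setminus R_u}\le U_u/\Delta_u$ for \emph{every} user $v$.

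Next I would anchor this local information to $U^*=\min(p,\,n(\overline{\Delta}-1))$. Every circulating subject must be injected by at least one direct producer-edge, so the number $k$ of circulating subjects is at most the number of injection edges. At equilibrium injection and aggregation are in tension: a configuration circulating few subjects cannot be stable, because a user could swap an injection edge for an edge following a richer user and, by the computation above, strictly gain whenever $\overline{\Delta}$ is not tiny (this is exactly why a ``star'' is unstable). Pushing this quantitatively, I would show that the number of circulating subjects that $u$ fails to receive, $\card{R_u^c\cap(\text{circulating})}$, is controlled by the swap inequality applied to the injectors of the missed subjects, while the total number of circulating subjects is bounded below by $U^*$ using the aggregate budget $n\overline{\Delta}=\sum_u\Delta_u$ and the disjointness of private sets. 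Combining these and converting the resulting additive deficit into a multiplicative statement yields $U_u\ge\frac{\overline{\Delta}-2}{\overline{\Delta}-1}U^*$.

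The price-of-anarchy bound then follows immediately: summing the per-user estimate over all $n$ users gives a total welfare at equilibrium of at least $n\cdot\frac{\overline{\Delta}-2}{\overline{\Delta}-1}U^*$, whereas the optimal welfare is at most $nU^*$, so $\text{PoA}\le\frac{\overline{\Delta}-1}{\overline{\Delta}-2}=1+1/(\overline{\Delta}-2)$. I expect the main obstacle to be the third step, namely turning the purely local single-swap conditions into the exact constant $\frac{\overline{\Delta}-2}{\overline{\Delta}-1}$: one must simultaneously bound the total number of circulating subjects missed by $u$ (a covering argument over injectors, not just a single swap) and bound the number of circulating subjects from below by $U^*$, all while carefully discharging the self-dependency in each deviation and keeping track of whether the binding case of $U^*$ is $p$ or $n(\overline{\Delta}-1)$.
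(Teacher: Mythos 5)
Your first two steps are sound, and they correctly reproduce the local part of the paper's argument: in the homogeneous game reception is pure reachability; if $U_u<p$ the budget is saturated and every in-edge carries a nonempty private set (else rewire to a missing producer), these private sets are pairwise disjoint, and your handling of the self-dependency in the single-swap deviation (subjects in $R_v\setminus R_u$ have paths avoiding $u$, hence survive the rewiring) is exactly right. This yields the valid inequality $\card{R_v\setminus R_u}\le U_u/\Delta_u$ for every $v$. These observations correspond to the redundancy arguments the paper uses inside Lemma~\ref{lem:pathback} (where $\Delta_{u_k}\ge 3$ is exploited by exhibiting two private subjects as witnesses). But your third step, which you yourself flag as the obstacle, is a genuine gap, and it is not a technical loose end: it is the entire content of the paper's proof, and your proposed route to it fails in two concrete ways.

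First, the anchoring claim that the number of circulating subjects is bounded below by $U^*$ at equilibrium is too strong to be provable; at a worst equilibrium all deficient users form a set with \emph{identical} reach sets and nothing outside that common set need circulate, so the circulating count can sit strictly below $U^*$ (the correct target, which the paper proves, is $p'\ge\sum_u(\Delta_u-2)=n(\overline{\Delta}-2)$, and the constant $\frac{\overline{\Delta}-2}{\overline{\Delta}-1}$ arises from comparing this with $U^*\le n(\overline{\Delta}-1)$, not from an additive deficit below $U^*$). Second, your proposed mechanism --- swapping an injection edge toward ``a richer user'' --- cannot bind, precisely because the paper shows (Lemma~\ref{lem:pathback}) that the deficient users form a strongly connected component in which everyone receives the same set: there is no richer user, your inequality $\card{R_v\setminus R_u}\le U_u/\Delta_u$ is vacuously $0\le U_u/\Delta_u$ inside the component, and summing it over injectors of missed subjects only gives bounds of order $n\,U_u/\Delta_u$, far too weak. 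What actually forces $p'\ge n(\overline{\Delta}-2)$ is a global edge-counting argument with no single-deviation substitute: no two users in the component follow the same producer, stability forbids transitivity arcs, and a strongly connected multidigraph without transitivity arcs has at most $2n-2$ arcs (Lemma~\ref{lem:strong}, proved by contracting circuits) --- so each user ``wastes'' at most two edges on user-user links on average and the remaining $\sum_u(\Delta_u-2)$ edges must fetch distinct producers. One must then still show the component contains \emph{all} users (the paper's $p'\le p-2$ case, where an outside user $w$ with $\Delta_w\ge 3$ could grab $p'+2$ subjects) and handle the residual case $p'=p-1$ via $p>\overline{\Delta}-1$. Your local swap inequality does show all equilibrium utilities lie within a factor $1+1/\Delta_u$ of each other, but lower-bounding the maximum is exactly what it cannot do.
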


Before proving Proposition~\ref{prop:nash_utility}, we establish two lemmas.  
The first one allows to show the existence of strongly connected components at equilibrium showing that under some technical assumption, if a forwarding path
exists at equilibrium, then a reverse path should also exists. 
. 

\begin{lemma}
\label{lem:pathback}
If an equilibrium is reached such that there exists a path $x,u_1,\ldots,u_k$ 
where $x$ is a producer, $u_k$ has in-degree bound $\Delta_{u_k}\ge 3$ 
and a producer $y$ is not received by $u_k$, 
then there is a path from $u_k$ to $u_1$. 
\end{lemma}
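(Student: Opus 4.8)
The plan is to argue by contradiction: assume we are at an equilibrium $F$ exhibiting the path $x\to u_1\to\cdots\to u_k$ but with \emph{no} directed path from $u_k$ back to $u_1$, and then exhibit a profitable unilateral deviation, contradicting the definition of equilibrium. Write $R(v)$ for the set of subjects received by $v$; in the homogeneous setting filtering is vacuous, so $R(v)$ is exactly the set of producers that can reach $v$, and $U_v(F)=\card{R(v)}$. Since the path edges lie in $E(F)$, we get the inclusions $\set{x}\subseteq R(u_1)\subseteq R(u_2)\subseteq\cdots\subseteq R(u_k)$, together with $y\notin R(u_k)$. We may assume the given path is simple and that $k\ge 2$ (the case $u_k=u_1$ makes the conclusion trivial).

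First I would pin down the local structure at $u_k$. Because $y\notin R(u_k)$ and $F$ is an equilibrium, $u_k$ must use all $\Delta_{u_k}\ge 3$ of its in-slots, as otherwise it could simply add the edge from producer $y$ and gain a subject. For the same reason every in-edge of $u_k$ is \emph{non-redundant}: deleting an edge $e$ while adding the edge from producer $y$ is a budget-preserving move whose net effect is $+1$ (the subject $y$) minus the subjects lost by deleting $e$, so non-profitability forces $e$ to carry at least one \emph{private} subject, i.e. a subject reaching $u_k$ only through $e$. Distinct in-edges carry distinct private subjects (a path into $u_k$ has a unique final edge), so the $\ge 2$ in-edges of $u_k$ other than $(u_{k-1},u_k)$ yield at least two distinct private subjects $p,p'$.

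The crux is to show $p,p'\notin R(u_1)$, so that $R(u_1)\subsetneq R(u_k)$ by at least two subjects; this is exactly where the no-path-back hypothesis is used, and it is the step demanding the most care because cycles make ``what reaches $u_k$'' subtle. Suppose $p\in R(u_1)$, witnessed by a path $\pi$ from $p$'s producer to $u_1$. If $\pi$ traversed the private edge $(w,u_k)$ it would visit $u_k$ and then continue to $u_1$, yielding a path from $u_k$ to $u_1$ and contradicting our assumption; hence $\pi$ avoids $u_k$. Concatenating $\pi$ with the simple sub-path $u_1\to\cdots\to u_k$ then produces a path reaching $u_k$ only via its final edge $(u_{k-1},u_k)\ne(w,u_k)$, contradicting that $p$ is private to $(w,u_k)$. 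Thus $p,p'\in R(u_k)\setminus R(u_1)$, and since $R(u_1)\subseteq R(u_k)$ we obtain $\card{R(u_k)}\ge\card{R(u_1)}+2$.

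Finally I would let $u_1$ deviate by dropping its edge from producer $x$ and adding the edge $(u_k,u_1)$, a budget-preserving swap (the new edge is genuinely new, for otherwise $R(u_1)\supseteq R(u_k)$, contradicting the previous paragraph). Together with the surviving path $u_1\to\cdots\to u_k$, the added edge places $u_1,\ldots,u_k$ on a common cycle, so in the new profile $F'$ one has $R^{F'}(u_1)=R^{F'}(u_k)$. Deleting $(x,u_1)$ can strip only the single subject $x$ from any received set, since that producer edge transmits nothing but $x$; hence $R^{F'}(u_k)\supseteq R(u_k)\setminus\set{x}$. Combining the two facts gives $\card{R^{F'}(u_1)}\ge\card{R(u_k)}-1\ge\card{R(u_1)}+1$, a strict improvement for $u_1$ and therefore a selfish move, the desired contradiction. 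I would emphasize that the bound $\Delta_{u_k}\ge 3$ is used precisely here: it supplies the \emph{two} surplus subjects needed to offset the possible loss of $x$, which is exactly why the statement breaks down at in-degree bound $2$.
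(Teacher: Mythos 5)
Your proof is correct and takes essentially the same route as the paper's: you use $\Delta_{u_k}\ge 3$ together with the unreceived producer $y$ to extract two distinct ``private'' subjects entering $u_k$ (the paper's unique subjects $z_1,z_2$), and then derive the contradiction from $u_1$ unfollowing $x$ and following $u_k$, losing at most the single subject $x$ while gaining two. Your explicit verification that the private subjects lie outside $R(u_1)$ under the no-path-back assumption, and your bookkeeping for the cycle created by the rewiring, simply make rigorous the steps the paper leaves implicit.
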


\begin{proof}
The existence of the path $x,u_1,\ldots,u_k$ first implies that $R(u_1)\subset R(u_k)$.
 Since $\Delta_{u_k}\ge 3$, $u_k$ must be connected to two nodes $v$
 and $w$ distinct from $u_{k-1}$.  We first claim that $v$ must bring
 at least one unique subject $z_1$ (not in $R(u_1)$ and thus different from $x$), 
otherwise, $u_k$ could unfollow $v$ and
 follow $y$ instead. Similarly, $w$ must bring at least one unique subject $z_2$ (different from $z_1$ and not in $R(u_1)$). 
 Then if there is no path from $u_k$ to $u_1$, $u_1$ would unfollow $x$ and follow
 $u_k$ instead, so that he only loses one subject $x$ but gains at least two
 subjects $z_1$ and $z_2$.
\end{proof}

The second Lemma will be used to bound the number of links between users
in a strongly connected component at equilibrium. We call \emph{transitivity
arc} a link $(s,t)$ such that there exists a path from $s$ to $t$.
Such a link is useless as any subjects it brings is also provided by the
path, and node $t$ would be better off following a non-received producer instead 
of $s$. Such links cannot thus exist at equilibrium.

\begin{lemma}
\label{lem:strong}
Consider a strongly connected graph $G$ with $n$ nodes and $m$ arcs
(multiple arcs are allowed). If
$m \ge 2n-1$, then $G$ contains a transitivity arc.
\end{lemma}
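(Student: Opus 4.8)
The plan is to exhibit a redundant arc by covering the strongly connected graph with two spanning trees. Concretely, I would fix an arbitrary vertex $r$. Since $G$ is strongly connected, every vertex is reachable from $r$, so a graph search from $r$ produces a spanning out-tree $T_{out}$ rooted at $r$ with exactly $n-1$ arcs, in which there is a directed path from $r$ to every vertex. Dually, applying the same argument in the reverse graph (which is again strongly connected) yields a spanning in-tree $T_{in}$ rooted at $r$ with $n-1$ arcs, in which there is a directed path from every vertex to $r$. Their union $T_{out}\cup T_{in}$ therefore uses at most $2(n-1)=2n-2$ distinct arcs of $G$.

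Since $m\ge 2n-1>2n-2$, there must be at least one arc $e=(s,t)$ of $G$ that does not belong to $T_{out}\cup T_{in}$. I claim $e$ is a transitivity arc. Indeed, following $T_{in}$ from $s$ to $r$ and then $T_{out}$ from $r$ to $t$ gives a directed walk from $s$ to $t$ that uses only arcs of $T_{out}\cup T_{in}$, and hence avoids $e$ altogether. Any directed walk from $s$ to $t$ contains a directed path from $s$ to $t$, and this path also avoids $e$, so it witnesses that $e$ is a transitivity arc (a path from $s$ to $t$ distinct from the arc $e$ itself), as required.

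The only delicate point is the structural fact that a strongly connected digraph admits a spanning out-tree and a spanning in-tree rooted at a \emph{common} vertex $r$; this is what forces the ``backbone'' of strong connectivity to cost at most $2n-2$ arcs, and it is really the crux of the argument. I expect the main obstacle to be stating this cleanly rather than proving it, since both trees come directly from searches in $G$ and in its reverse, and one should note that their union may contain fewer than $2n-2$ arcs when the trees share arcs, which only helps. Finally, the threshold $2n-1$ is tight: the bidirected tree (each of the $n-1$ tree edges replaced by two opposite arcs) is strongly connected, has exactly $2n-2$ arcs, and admits no transitivity arc, since for tree-adjacent endpoints the only route from $s$ to $t$ must cross the cut separating them, and that cut contains only the two arcs $(s,t)$ and $(t,s)$.
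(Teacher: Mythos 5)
Your proof is correct, and it takes a genuinely different route from the paper's. The paper argues by induction on $n$ in contrapositive form: assuming $G$ is strongly connected with no transitivity arc, it finds an oriented cycle of length $k\ge 2$, observes that no chords between cycle vertices can exist, contracts the cycle to a single vertex, and applies the induction hypothesis to conclude $m\le 2n-2$. Your argument instead is a direct, non-inductive counting argument: a spanning out-arborescence and a spanning in-arborescence rooted at a common vertex $r$ form a strongly connected ``backbone'' of at most $2n-2$ arcs, so $m\ge 2n-1$ forces an arc $e=(s,t)$ outside the backbone, which is bypassed by the route $s\to r\to t$; extracting a path from this walk (it avoids $e$ since the walk lives entirely in the backbone) exhibits $e$ as a transitivity arc. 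Your approach buys constructiveness --- it explicitly locates a transitivity arc via two graph searches, which also makes the source of the constant $2n-2$ transparent --- and your bidirected-tree example establishes tightness, which the paper does not record; the paper's contraction induction buys slightly less machinery (no arborescence existence fact) and directly yields the structural statement that every transitivity-arc-free strongly connected multigraph has at most $2n-2$ arcs. Two degenerate cases in your write-up are fine but worth noting: parallel arcs cause no trouble since the backbone excludes the specific arc $e$ rather than the vertex pair, and if $e$ is a self-loop ($s=t$) your ``walk contains a path'' step degenerates to the empty path from $s$ to itself, which is exactly the convention the paper adopts in its own base case $n=1$.
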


\begin{proof}
We prove the result by induction on $n$. The hypothesis is true for $n=1$ (a
self-loop on vertex $s$ is a transitivity arc for the empty directed path from
$s$ to $s$).  We denote by $n(G)$ the number of nodes in the graph $G$ and by
$m(G)$ the number of edges in the graph $G$.  Now consider $n>1$ and assume that
the property is true for any
graph $G'$ with $n(G')<n$. Consider a strongly connected 
graph $G$ with $n$ nodes containing no transitivity arc.  
Since $n\ge 2$, $G$ must contain a circuit, i.e. an oriented cycle,
with $k\ge 2$ nodes. The only arcs connecting two nodes of the
circuit are the circuit arcs (otherwise, we would encounter a
transitivity arc).
Consider the graph $G'$
obtained by contracting the circuit to one node.
We have $m(G')=m(G)-k$ and $n(G')=n(G)-k+1<n$. Note that  $G'$ does not contain
a transitivity arc either. Our induction
hypothesis thus implies that $m(G')<2n(G')-1$. That is
$m(G)-k<2(n-k+1)-1$ or equivalently $m(G)<2n-k+1\le 2n-1$ as
$k\ge 2$.  The property is thus satisfied for $n$.
\end{proof}

We are now ready to prove Proposition~\ref{prop:nash_utility}.

\begin{proof}[of Proposition~\ref{prop:nash_utility}]
 Consider any equilibrium.
If all users receive at least $p$ subjects, then the equilibrium is optimal.  We thus consider the case where there is a user $u$ who receives less than $p$ subjects.
Then $u$ must be connected to some producer $x$ by a path
$x,u_1,\ldots,u_k=u$. 
Consider
the graph $G'$ induced by users reachable from $u_1$ that receive less
than $p$ subjects.
By Lemma~\ref{lem:pathback}, $G'$ is strongly connected and all its users
receive the same number $p'<p$ of subjects.

We claim that two users $u$ and $v$ of $G'$ cannot follow the same
producer $y$. As there exists a path from $u$ to $v$, the link 
$(y,v)$ would be redundant and $v$ would be better off following some unreceived subject instead.
Moreover, the fact that users in $G'$ do not receive all subjects implies
that they have spent all their budget of attention.
We thus conclude that the number of edges in $G'$
is $m(G')=\sum_{u\in V(G')}\Delta_u - p'$.
As the network is stable, there is no transitivity arc in $G'$.  
(Otherwise, a transitivity arc $(s,t)$ would be redundant with some 
path from $s$ to $t$, and $t$ would be unstable as he could increase his utility by rewiring
this link to a new producer.)
Lemma~\ref{lem:strong} thus implies $m(G')\le 2n(G')-2\le 2n(G')$,
where $n(G')$ is the number of nodes in $G'$.
We thus get 
$p' \ge \sum_{u\in V(G')}\Delta_u-2n(G') = \sum_{u\in V(G')} (\Delta_u-2)$.

First consider the case $p'\le p-2$. Suppose there exists
a user $w\notin V(G')$. 
As $\Delta_w\ge 3$, 
$w$ has utility at least $p'+2$ since he can gather the $p'$ subjects 
received in $G'$ plus two others by connecting to one node in $G'$ 
plus the two corresponding producers. He thus receives two subjects not received
in $G'$ but this contradicts the stability as $u_1$ would better 
unfollow $x$ and connect to $w$.
We thus conclude that $G'$ indeed contains all users,
implying $p'\ge n(\overline{\Delta}-2)$.
Using Claim~\ref{claim:single}, the utility of each user is at least
$p'\ge \frac{\overline{\Delta}-2}{\overline{\Delta}-1} (U^*-1)$.
The global welfare at equilibrium is thus at least 
$n^2(\overline\Delta-2)$. As the optimal global welfare is at most
$n^2(\overline{\Delta}-1)$ according to Claim~\ref{claim:opt},
the price of anarchy is at most 
$\frac{\overline{\Delta}-1}{\overline{\Delta}-2}$.

Finally, in the remaining case where $p'= p-1$, some users may
be outside $V(G')$. However such users must also receive
$p-1$ subjects at equilibrium (if a user was receiving less, 
he could increase his utility by following a node in $V(G')$).
The utility of each user is thus at least
$\frac{p-1}{p} U^*
  \ge \frac{\overline{\Delta}-2}{\overline{\Delta}-1} (U^*-1)$
as $p\ge \overline{\Delta}-1$
and $U^* \ge (U^*-1)$.
As the optimal global welfare is bounded by $nU^*$,
the price of anarchy is at most 
$\frac{p}{p-1} \le \frac{\overline{\Delta}-1}{\overline{\Delta}-2}$.

In both cases, each user gets utility 
$\frac{\overline{\Delta}-2}{\overline{\Delta}-1} (U^*-1)$ at least
at equilibrium and the price of anarchy is at most
$\frac{\overline{\Delta}-1}{\overline{\Delta}-2} 
  = 1 + \frac{1}{\overline{\Delta}-2}$.
\end{proof}

Note that the above proposition is tight in the sense that high
price of anarchy can arise when most of the users have budget only 2.
Figure~\ref{fig:poa.d2} presents the extreme configuration where
all nodes have budget 2. 
In this particular example, a doubly linked
chain forms a 
pure Nash equilibrium gathering only two subjects in total while a
ring configuration gathers $n$ subjects. The price of anarchy is thus $n/2$.
Indeed the doubly linked chain is still stable when some nodes have budget
more than 2 and use their spare connections to gather fresh subjects.
In that case, the price of anarchy is 
$\frac{n(\overline{\Delta}-1)}{n(\overline{\Delta}-2)+2}$. It thus remains
unbounded as long as $\overline{\Delta}=2+o(1)$.

\begin{figure}[htbp]
\centering
  \subfigure[Optimal configuration]{
  \label{fig:poa.d2.opt}
  \includegraphics[width=.6\linewidth]{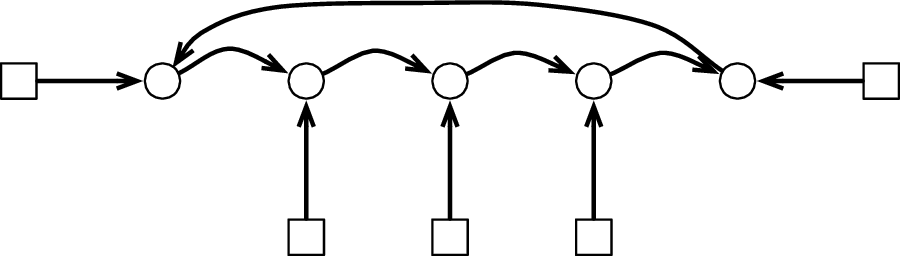}
  }\\%
  \subfigure[A pure Nash equilibrium configuration]{
  \label{fig:poa.d2.ne}
  \includegraphics[width=.6\linewidth]{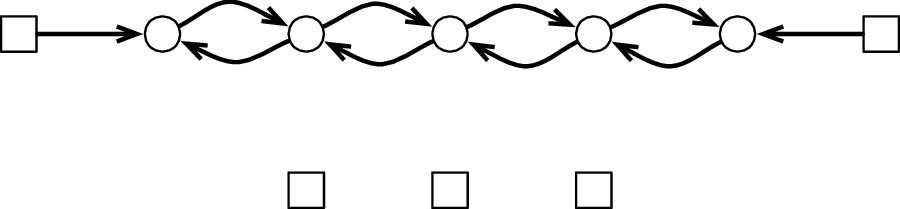}
  }
\caption{Two stable configurations in a
homogeneous game where all users have degree $\Delta=2$ showing
that price of anarchy can be $n/2$.}
\label{fig:poa.d2}
\end{figure}

\subsection{Convergence of Dynamics}
\label{sub:convhom}

We have thus shown that stable configurations of self-organizing
networks with homogeneous user interests are efficient.  However, do
network dynamics converge to an equilibrium ?
The following proposition answers this question in the
affirmative.  


\begin{proposition}
\label{prop:hom.conv}
Any homogeneous flow game has an ordinal potential function, implying that
selfish dynamics always converge to an equilibrium in finite time.
\end{proposition}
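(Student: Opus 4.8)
The plan is to exhibit a single real-valued function on strategy profiles that strictly increases at every selfish move; since it takes only finitely many values (each utility lies in $\set{0,\dots,p}$), no infinite improving sequence can exist, which is exactly convergence and furnishes the claimed potential. The tempting candidate $\sum_{u}U_u(F)$ does \emph{not} work: in the homogeneous case every user forwards every subject she receives, so when $u$ rewires she may drop a subject that some descendants received only through her; two such descendants can then each lose a subject while $u$ gains only one, making the total welfare fall even though $U_u$ strictly rises. I would therefore \emph{weight} the utilities so that the mover's guaranteed improvement always dominates the possible downstream losses.

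First I would record the structural effect of a single move. Since a user edits only her incoming edges, changing $F_u$ leaves the out-edges of every node untouched, so the set $\mathrm{Desc}(u)$ of nodes reachable from $u$ is identical before and after the move, and the received set of any node outside $\set{u}\cup\mathrm{Desc}(u)$ is unchanged (no producer-to-node path can run through $u$'s edited edges without making that node a descendant of $u$). Homogeneity enters here: every intermediate node is interested in every subject and hence forwards it, so in the new configuration each descendant $w$ of $u$ receives a superset of what $u$ receives, i.e. $R'(u)\subseteq R'(w)$. Writing $k=U_u(F)$ and $k'=U_u(F')$, a selfish move gives $k'\ge k+1$, and every affected node $w\in\set{u}\cup\mathrm{Desc}(u)$ then satisfies $U_w(F')\ge\card{R'(u)}=k'\ge k+1$. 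This is the crucial point: although individual descendants may \emph{lose} subjects, after the move all affected utilities are at least $k+1$, strictly above $u$'s old value $k$.

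Next I would take the potential $f(F)=-\sum_{u\in V}(n+1)^{-U_u(F)}$ and check the improving-move direction. Only affected nodes change, so $f(F')-f(F)=\sum_{w\text{ affected}}\bigl((n+1)^{-U_w(F)}-(n+1)^{-U_w(F')}\bigr)$. By the previous paragraph the subtracted sum is at most $n\,(n+1)^{-(k+1)}$ (at most $n$ affected users, each now with utility $\ge k+1$), while the added sum is at least $(n+1)^{-k}$ (just the term for $u$ before the move). Hence $f(F')-f(F)\ge (n+1)^{-k}-n\,(n+1)^{-(k+1)}=(n+1)^{-(k+1)}\bigl((n+1)-n\bigr)=(n+1)^{-(k+1)}>0$, so $f$ strictly increases on every selfish move. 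The symmetric computation handles a move that strictly decreases $U_u$, giving the sign condition of an (generalized) ordinal potential on all utility-changing moves, which is what convergence requires. Since $f$ is bounded in $[-n,0)$ and takes finitely many values, selfish dynamics terminate.

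The main obstacle is the structural claim of the second paragraph — that downstream utilities, though they may individually drop, are all bounded below by the mover's \emph{new} (strictly larger) utility — together with calibrating the base $n+1$ so that the single guaranteed gain at level $k$ outweighs up to $n$ possible losses that all land at level $\ge k+1$. Once the base exceeds the number of users this domination is automatic, which is exactly why $n+1$ suffices; a smaller base would let many simultaneous downstream drops overwhelm the mover's gain, mirroring the failure of the unweighted sum $\sum_u U_u(F)$.
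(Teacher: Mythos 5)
Your proof is correct and takes essentially the same route as the paper: your structural claim (a selfish move affects only the mover and her descendants, all of whom end with utility at least $k+1$, since in the homogeneous case every descendant receives a superset of what the mover receives) is exactly the paper's key observation, and your potential $-\sum_{u}(n+1)^{-U_u(F)}=-\sum_{i} n_i (n+1)^{-i}$ is, up to a positive rescaling and a change of base from $n$ to $n+1$, the paper's encoding $-\sum_{i} n_i\, n^{p-i}$ of the lexicographic order on the count vector $(n_0,\dots,n_p)$. If anything, your explicit base-$(n+1)$ domination computation is slightly cleaner than the paper's base-$n$ version, but the underlying argument is the same.
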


\begin{proof}
 Let $n_i$ denote the number of users that receive $i$ subjects and
 consider the sequence $(n_{0}, n_{1}, \ldots, n_{p})$.
 We show that this sequence
 always decreases according to lexicographic ordering when users make 
 selfish moves. 
 The function $-\sum_{0\le i\le p}n_i\,n^{p-i}$ 
(obtained by reading $n_0 n_1\cdots n_p$ as a number) 
is thus a potential function 
 that will always increase until
 a local maximum is reached, proving convergence to an equilibrium.

 Consider a user $u$ that is receiving $i$ subjects and that will
 make a selfish move to receive $j> i$ subjects
 instead.  
 Note that there is no path from $u$ to any other user receiving $k <
 i$ subjects.  Therefore any change by $u$ will not affect these
 users.  Now consider any user $v$ with $k \ge i$ subjects.  If there
 is no path from $u$ to $v$ then $u$'s selfish move does not affect
 $v$.  If there is such a path, then $v$ will now receive at least
 $j>i$ subjects. We thus now have $n_i-1$ users receiving $i$
 subjects, and the sequence $(n_{0}, n_{1}, \ldots, n_{p})$ has
 decreased according to lexicographic ordering.
\end{proof}

Combining Proposition~\ref{prop:nash_utility} and
Proposition~\ref{prop:hom.conv}, we obtain:

\begin{theorem}
\label{th:hom}
In a homogeneous flow game where $n\ge 3$ users have 
budget of attention at least 3,
less than $p$, and $\overline{\Delta}$ in average, 
selfish dynamics converge to an equilibrium such
that the utility of a user is at least 
$\frac{\overline{\Delta}-2}{\overline{\Delta}-1} (U^*-1)$ 
where $U^*$ is the optimal utility he can get. The price
of anarchy is $1+1/(\overline{\Delta}-2)$ at most.
\end{theorem}

Our proof of Proposition~\ref{prop:hom.conv} yields a very loose bound of $n^{p+1}$ on convergence time.  We
leave as an open question whether exponential time of convergence can
really arise. 
However, in the following proposition 
we show that a homogeneous flow game with at
least 4 subjects, a user with budget of attention at least 2 and a
user with budget of attention at least 3, is not equivalent to a
congestion game.  
%
This rules out the possibility of using techniques
similar to \cite{DBLP:conf/stoc/FabrikantPT04} to find equilibria in
polynomial time, and more generally to easily bound convergence time.  

\begin{proposition}
\label{prop:not.congestion}
Any homogeneous flow game  with at
least 4 subjects, a user with budget of attention at least 2 and a
user with budget of attention at least 3,
does not admit an exact potential function.
\end{proposition}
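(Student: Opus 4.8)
The plan is to prove that no exact potential function exists by exhibiting a cyclic sequence of selfish moves around which the sum of utility changes fails to vanish. Recall that if $f$ were an exact potential function, then for any closed cycle of moves $F^{(0)} \to F^{(1)} \to \cdots \to F^{(m)} = F^{(0)}$, the telescoping sum $\sum_{t} \left(f(F^{(t+1)}) - f(F^{(t)})\right)$ would be zero; but by the exactness condition each term equals the utility gain $U_{u_t}(F^{(t+1)}) - U_{u_t}(F^{(t)})$ of the moving player. Hence the existence of an exact potential is equivalent to the vanishing of the sum of per-move utility differences around every closed cycle in the configuration graph. So it suffices to construct one closed cycle along which this sum is nonzero.

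First I would fix a minimal homogeneous instance matching the hypotheses: at least four subjects $P = \{a,b,c,d\}$, one user $u$ with $\Delta_u = 2$ and one user $w$ with $\Delta_w = 3$ (possibly adding a couple of producer-only or auxiliary nodes so that all four subjects are available to be followed directly). The key structural feature to exploit is that in a flow game the utility change of a single move is \emph{not} a fixed quantity attached to an edge, because what a node gains by following $v$ depends on which subjects are already flowing into $v$ — i.e. on the global configuration — so the same edge addition can yield different utility jumps in different states. The task is to arrange a short cycle of moves (alternating between $u$ and $w$, returning to the start) in which the same structural edge change is traversed once in a state where it yields one utility value and once (in the reverse or a parallel branch) where it yields a different value, so that the signed sum around the loop is nonzero.

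Concretely, I would look for a four-move cycle: let $w$ (budget $3$) toggle between two strategies that differ in how many fresh subjects it picks up depending on whether $u$ is currently relaying a particular subject to it, while $u$ (budget $2$) toggles its own connection between pointing at a producer versus pointing at $w$ so as to change the set of subjects reaching $w$. Because $u$ has budget $2$ and $w$ has budget $3$, one can set things up so that when $u$ relays subject, say, $c$, player $w$'s best improving move adds, say, $2$ subjects, whereas in the configuration reached after $u$ has rewired, the analogous move for $w$ adds only $1$; closing the loop with $u$'s two (reversed) moves, whose utility increments cancel by symmetry, leaves a residual of $+1$ in the total, contradicting the zero-sum requirement for an exact potential. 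The bookkeeping is to verify (using the definition of ``receives'' via filtering paths) that every one of the four moves is genuinely selfish, i.e. strictly utility-increasing for its mover, so that the cycle is a legitimate sequence of moves in the game.

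The main obstacle I anticipate is the combinatorial design of the cycle: I must simultaneously guarantee that (i) each move strictly increases the mover's utility — so the cycle is realizable by selfish moves and the non-vanishing sum is meaningful — and (ii) the utilities do not inadvertently telescope to zero, which requires the same abstract ``edge'' to be inserted under two different flow contexts producing genuinely different marginal gains. Ensuring filtering paths behave as intended (a subject only propagates through nodes interested in it, which in the homogeneous case is automatic since $S_u = P$) simplifies the propagation analysis, but the delicate part is checking that rewiring $u$ actually changes what $w$ can harvest while keeping all four moves strictly improving. I would verify the construction by drawing the four explicit configurations and tabulating the received-subject sets $R(\cdot)$ at each step, confirming the four utility increments sum to a nonzero value and thereby ruling out any exact potential.
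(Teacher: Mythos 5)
Your overall strategy is exactly the paper's: exhibit a closed $4$-cycle in the strategy space along which the per-move utility changes do not sum to zero, contradicting the telescoping identity an exact potential would impose. However, there is a genuine flaw in your plan: you require every move in the cycle to be \emph{strictly selfish} (``so that the cycle is a legitimate sequence of moves in the game''). This requirement is both unnecessary and, in this game, unsatisfiable. It is unnecessary because the exactness condition $f(F')-f(F)=U_u(F')-U_u(F)$ must hold for \emph{every} unilateral strategy change, improving or not; the Monderer--Shapley cycle criterion quantifies over arbitrary closed cycles in the strategy space, not over trajectories of selfish dynamics. It is unsatisfiable because Proposition~\ref{prop:hom.conv} of this very paper shows that every homogeneous flow game admits an \emph{ordinal} potential, so no cycle of strictly improving moves can exist in a homogeneous game --- the ``delicate part'' you flag (keeping all four moves strictly improving) is in fact a dead end, and an attempt to complete your construction under constraint (i) must fail. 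Your sketch is even internally inconsistent on this point: if $u$'s two reversed moves ``cancel by symmetry,'' one of them has non-positive increment and hence is not selfish.

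The repair is simply to drop the all-selfish requirement, after which the paper's concrete instance works: with producers $\set{a,b,c,d}$, user $u$ ($\Delta_u\ge 2$) toggling between $A=\set{(a,u)}$ and $B=\set{(b,u),(c,u)}$, and user $v$ ($\Delta_v\ge 3$) toggling between $C=\set{(u,v),(b,v),(c,v)}$ and $D=\set{(u,v),(d,v)}$, the cycle $(A,C)\rightarrow(B,C)\rightarrow(B,D)\rightarrow(A,D)\rightarrow(A,C)$ has utility variations $+1,+1,-1,+1$ summing to $2\neq 0$. Note that the third move, $u$ reverting from $B$ to $A$, strictly \emph{decreases} $u$'s utility --- precisely the kind of move your constraint would forbid --- while your correct structural intuition (the marginal value of an edge depends on the global flow configuration, here via which subjects $u$ relays to $v$) is what makes $v$'s two moves each gain $+1$ in their respective contexts rather than cancelling.
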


Note that a game is equivalent to a congestion game if and only if it admits an
exact potential function~\cite{DBLP:bibsonomy_Monderer:1996p121}.

\begin{proof}
This is proven by considering cycles in the strategy space where each
point corresponds to a set of strategies chosen by all the users
and an arc corresponds to a selfish move by a user.
A potential function assigns a value to each point. Its variation 
along an arc is the difference between the values assigned to the destination
and the source. We define its variation along a path as the
sum of the variations of the arcs of the path. Obviously, the variation
along a cycle must be zero.

Additionally, an exact potential function should ensure that the variation
during a move by a user $u$ equals the variation of the utility of $u$.
We define similarly the variation of utility along a path with
selfish moves from users $u_1,\ldots,u_k$ respectively as the sum of
utility variations for node $u_1$ in the first move, node $u_2$ in the
second move, and so on. If ever, we can exhibit a cycle with non-zero
utility variation in our flow game, it is clearly impossible to design an
exact potential function for that game satisfying both requirements.
(For more details about congestion games and exact potential functions,
see for example~\cite{DBLP:bibsonomy_Monderer:1996p121}.)

To show the proposition, we exhibit a 4-cycle in the strategy space
with non-zero utility variation along the cycle. Without loss of
generality, the game contains four producers $\set{a,b,c,d}$ and two
users $u,v$ with $\Delta_u\ge 2$ and $\Delta_v\ge 3$ as depicted
in Figure~\ref{fig:cycle4}. User $u$ can adopt in particular
strategy $A=\set{(a,u)}$ or $B=\set{(b,u),(c,u)}$. User $v$ can adopt
in particular strategy $C=\set{(u,v),(b,v),(c,v)}$ or $D=\set{(u,v),(d,v)}$.
Consider the cycle $(A,C)\rightarrow (B,C)\rightarrow (B,D)\rightarrow
(A,D)\rightarrow (A,C)$ where user $u$ moves from strategy $A$ to $B$
increasing his utility by 1, then $v$ moves from $C$ to $D$ and increases
his utility by 1, then $u$ moves back to $A$ with a utility variation of -1, and
finally $v$ moves back to $C$ increasing its utility by 1 again
(the strategies for other users remain fixed).
The overall sum is thus $2\not= 0$.
\end{proof}

\begin{figure}[htbp]
\centering
\includegraphics[width=\linewidth]{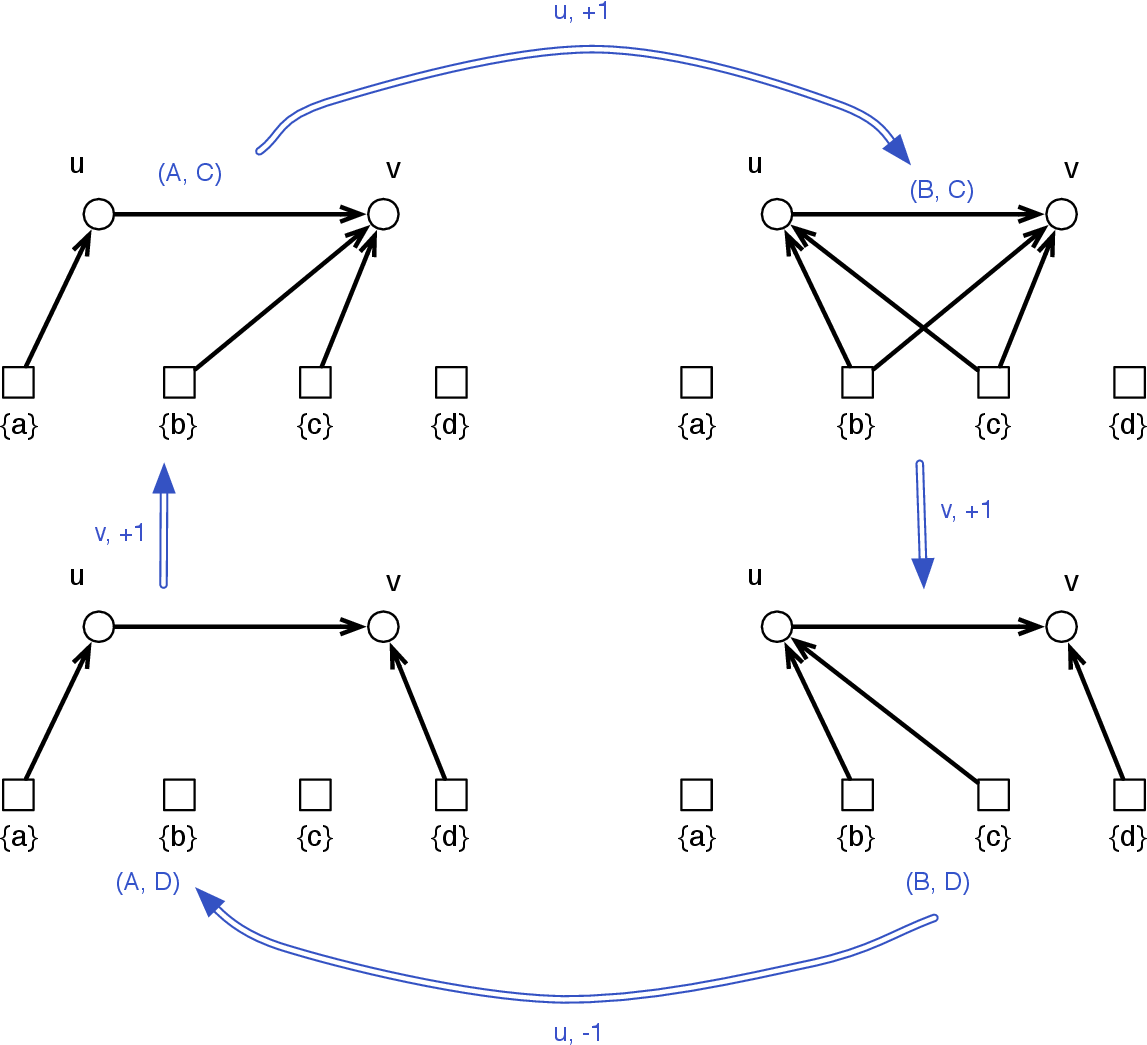}
\caption{A 4-cycle $(A,C)\rightarrow (B,C)\rightarrow (B,D)\rightarrow
  (A,D)\rightarrow (A,C)$ in the strategy space with non-zero utility
  variation. The cycle is represented with blue double arrows. Each double arrow
  corresponds to a move where a user changes his connection strategy.  This
  brings the network from one configuration to another. Each double arrow is
  labeled with the name of the user making the move and his utility variation.}
\label{fig:cycle4}    
\end{figure}

Instead of trying to obtain tight bounds on the convergence time of 
arbitrary sequences of moves, we will now
prove that convergence time is polynomial under some natural
assumption concerning the dynamics. 
The idea is to assume some fairness among users
in the sense that they regularly have the opportunity to make a move.
To measure this, we call \emph{round} a sequence of moves where each user
can be associated to a point in the sequence where he
either performs a selfish move or cannot make any selfish move 
(if he is given the possibility to make a move at that point, 
he can either perform the move indicated in the sequence or
no move can increase his utility in the configuration obtained by
the moves up to that point).
Interestingly, we could consider a sequence of moves as fair when
starting at any moment $t$ in the sequence, the moves from $t$ to 
$t+O(n)$ constitute a round. If convergence is polynomial in number of rounds,
it is then polynomial in number of moves also.
More generally, we consider as fair 
any sequence of moves without any infinite round.
Such sequences will be called \emph{fair dynamics} and can always be 
decomposed as a sequence of finite rounds.

\begin{proposition}
Any homogeneous flow game with $p$ producers and $n$ users having
average budget of attention $\overline\Delta$
converges in $O(np + n^2\overline\Delta)$ rounds
under fair dynamics.
\end{proposition}

\begin{proof}
We claim that the number of users with minimum utility $U_m$ (those getting the
least number $U_m$ of subjects) decreases every three rounds as long as
equilibrium is not reached and the value of $U_m$ has not increased. 
The proposition clearly follows from that fact since
$3n$ rounds at most then suffice to increase the minimum utility by one at least
and the maximum utility a user can get is bounded by $\min(p,
n\overline\Delta)$.  To show this, assume that during one round, no user with
minimum utility can make a selfish move. A first trivial case occurs when all
users have the same utility $U_m$. This means that no user could make a move
which implies that equilibrium has been reached. Otherwise, consider what occurs
in the next round. Either no node can move and we reach equilibrium, or a node
with minimum utility $U_m$ can move, or a node $u$ receiving more than $U_m$
subjects can move and then receives at least $U_m + 2$ subjects.  In the latter
case, consider the first time the opportunity of moving is given to a node $v$
receiving $U_m$ subjects and following directly at least one producer $x$. (Such
a node must exist: as any user followed by a user having minimum utility has
also minimum utility, if all users with minimum utility are only connected
together, their utility is zero and they can obviously move.) Then rewiring
$(x,v)$ into $(u,v)$ is a move for $v$. This is due to the fact that $u$ still
receives $U_m + 2$ subjects at least until the turn of $v$ comes (any move by a
user $u'$ with utility $U_m+1$ or more increases his utility to $U'\ge U_m+2$
and if $u$ is affected, his utility cannot drop bellow $U'$).  As $u$ receives
at least two subjects not received by $v$, the utility of $v$ increases by one
at least with this move. As the move of $u$ occurs within the second round after
a round without any progress for nodes with minimum utility, the move of $v$
occurs within the third round at most.
\end{proof}

The interested reader can easily build sequences of moves with length
$\Omega(n^2)$. We will thus not try to improve beyond polynomial time
convergence in this section. However, we will see in the heterogeneous
case how convergence within a logarithmic number of rounds can arise
when interests of users have some geometrical structure
(see Section~\ref{sec:doubling}).

\section{Heterogeneous interests}
\label{sec:het}

We now consider the more realistic case where users have differing
sets of interests.  
To make the model even more general, we
assume here that users weight independently
topics.  Let $W_u(s)$ denote the weight (or \textem{value}) of topic $s$ to
user $u$.  The objective of a user is now to maximize the sum of the values 
of subjects he receives.
We will consider user-interest sets $S_u \subseteq P$ that
include topics of non-zero value, that is $S_u = \{s:
W_u(s)>0\}$.  
Such
user-specific weights for topics represent a natural expertise or
focussed interest users may have on a subset of topics.
(Note that the model presented previously corresponds to 
$W_u(s)=1$ for $s\in S_u$, $W_u(s)=0$ for $s\notin S_u$.) 

\subsection{Price of Anarchy}

We now show that the price of anarchy of such a system may be
unbounded.  


\begin{proposition}
In a heterogeneous flow game with $n$ users having 
budget of attention $\Delta$ each,
the price of anarchy can be $\Omega\paren{\frac{n}{\Delta}}$.
\end{proposition}

\begin{proof}
  We show the result through an example, illustrated in
  Figure~\ref{fig:het.poa}.  For positive integer $k$, 
  consider a system with $n=2k$ users having 
  budget of attention $\Delta\ge 2$ each, and $p=2(\Delta-1)k$ producers.  
  We distinguish two set of users $\set{a_1,\ldots,a_k}$ and $\set{b_1,\ldots,b_k}$.
  Similarly, the producers are partitionned into groups $\set{A_1,\ldots,A_k}$
  and $\set{B_1,\ldots,B_k}$ where each $A_i$ (resp. $B_i$) contains $\Delta-1$ 
  producers. 

  As illustrated in Figure~\ref{fig:het.poa.int},
  each user $a_i$ has a value of $1$ for each topic in $A_i\cup B_i$ and additionally 
  the first element of each $A_j$ for $j\not= i$.  In the figure, this is represented by the solid red line.  Similarly, each
  user $b_i$ has a value of $1$ for each topic in $A_i\cup B_i$ and additionally 
  the first element of each $B_j$ for $j\not= i$. This is represented by the dashed blue line in the figure.  Users have a value
  of zero for all other topics. 
 
  Figure~\ref{fig:het.poa.opt} shows a benchmark configuration, with solid red edges for nodes of type $a$ and dashed blue edges for nodes of type $b$. In this configuration, users $a_i$, $i=1, \ldots ,k$ construct an oriented ring, and similarly users $b_i$, $i=1,\ldots ,k$ construct a separate oriented ring.  They use their remaining links to connect to producers.
User $a_i$ is then connected to $a_{i-1}$
  (with $a_0$ corresponding to $a_k$) and also to all producers in $A_i$ using the remaining $\Delta-1$ links.
  Similarly, user  $b_i$ is connected to $b_{i-1}$
  (with $b_0$ corresponding to $b_k$) and also to all producers in $B_i$. The corresponding utility is $n(n/2 + \Delta-2)$, 
   so that
  the optimal global welfare $\U^*$ satisfies $\U^* \ge n^2/2$.  
  
  Figure~\ref{fig:het.poa.ne} shows an equilibrium configuration, where each user
  $a_i$ (resp. $b_i$) connects to producers in $A_i$ (resp. $B_i$) using $\Delta-1$ links and to
  $b_i$ (resp. $a_i$) using one link. Note that neither user can gain by making a unilateral move since each of the other users (of indices $j \neq i$) can only provide one additional subject as opposed to the 
 $\Delta-1$ subjects they now receive from each other. 
The global utility here is $\U = n(2\Delta-2)\le 2n\Delta$, 
  and the price of anarchy is thus at least $\frac{n}{4\Delta}$.  
\end{proof}

\begin{figure}[htbp]
\centering
  \subfigure[Interest sets of $a_i$ and $b_i$: solid red line for $a_i$ and dashed blue line for $b_i$. ]{
  \label{fig:het.poa.int}
  \includegraphics[width=.7\linewidth]{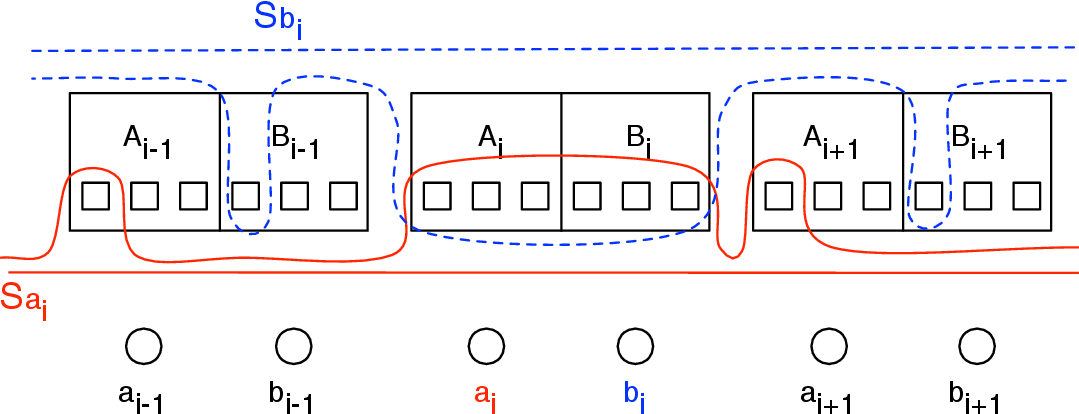}
  }\\%
  \subfigure[Benchmark configuration with two oriented rings.]{
  \label{fig:het.poa.opt}
  \includegraphics[width=.7\linewidth]{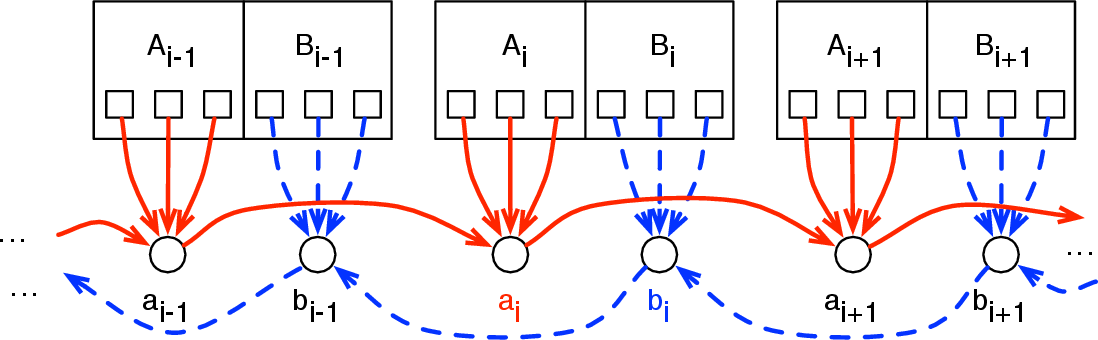}
  }\\%
  \subfigure[A pure Nash equilibrium configuration.]{
  \label{fig:het.poa.ne}
  \includegraphics[width=.7\linewidth]{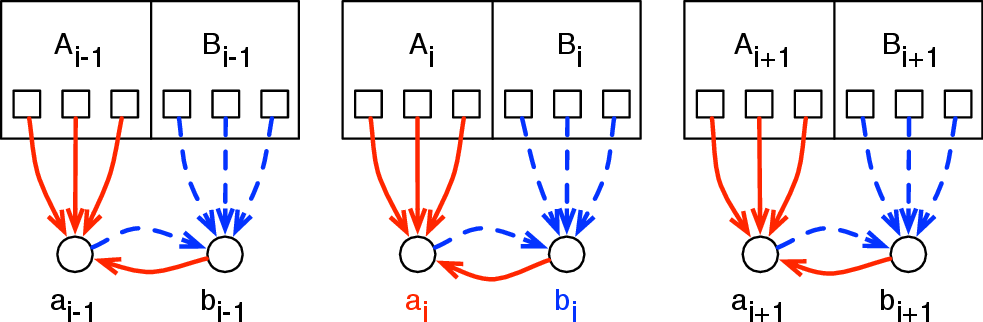}
  }
\caption{A heterogeneous flow game where all users have same budget of attention
$\Delta=4$ and two configurations showing that price of anarchy can reach
$\frac{n}{4\Delta}$.}
\label{fig:het.poa}
\end{figure}

\subsection{Convergence of dynamics}

We have shown that the price of anarchy can be unbounded with respect to
the number of users in some cases.  

We now show that selfish dynamics do not even guarantee convergence to an
equilibrium.  

\begin{proposition}
\label{prop:het.instability}
Selfish dynamics of a flow game with heterogeneous utilities may not converge.
\end{proposition}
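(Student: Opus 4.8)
The plan is to exhibit an explicit heterogeneous flow game whose selfish-improvement dynamics contain a directed cycle. Since the strategy space $\cF$ is finite, it suffices to display a finite sequence of strategy profiles $F^{(0)}, F^{(1)}, \dots, F^{(m)} = F^{(0)}$ in which each step $F^{(t)} \to F^{(t+1)}$ is a single-user selfish move, i.e. the rewiring user strictly increases her utility. Iterating such an \emph{improvement cycle} produces an infinite sequence of selfish moves, so by the definition of convergence the dynamics do not converge (and in particular no profile on the cycle is an equilibrium).

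The subtlety is that all $m$ moves must be \emph{strictly improving}, in sharp contrast with the $4$-cycle of Figure~\ref{fig:cycle4}, where one of the four moves strictly decreased the mover's utility (the signs were $+,+,-,+$). There the obstruction is intrinsic: the oscillating user $u$ does not follow $v$, so her utility for her two candidate strategies is independent of $v$'s choice and hence cannot prefer both directions of the toggle; re-weighting the topics via the weights $W_u(\cdot)$ cannot repair this. I would therefore couple the users so that each one's preferred local choice genuinely depends on the bundle of topics relayed to her through filtering. A two-user coupling still seems insufficient: the natural version — each user relying on the other to relay a shared topic — produces a \emph{coordination} game (both users prefer that exactly one of them injects the shared topic), which admits a potential and hence converges, exactly as in the homogeneous case of Proposition~\ref{prop:hom.conv}. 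This suggests at least three users are needed.

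I would thus use three users $1,2,3$ with heterogeneous, cyclically arranged interest sets, each of size strictly larger than the user's budget of attention. Each user can complete her interest set only by following one neighbour (to inherit a relayed, filtered bundle) and spending her remaining link on a producer \emph{complementary} to that bundle, so that her best local grab flips according to which topic the neighbour currently relays. Arranging the three ``which shared topic do I inject'' decisions around an \emph{odd} cycle makes the assignment frustrated — an anti-coordination loop with no consistent solution — so that the best response rotates around the ring: user $1$'s rewiring strictly helps her but destroys a topic that user $2$ was relaying, triggering $2$'s strictly improving rewiring, which disrupts $3$, and so on, closing up after a full turn. The main obstacle is precisely verifying that every move in the closed loop is strictly improving while the loop returns to its start: this requires computing, at each profile, the least fixpoint of relayed topics induced by the ring of mutual-following links (to rule out spurious circulation of a topic that no producer injects), and then tuning the weights $W_u(\cdot)$ so that each disruption strictly outweighs the compensating gain of the next mover. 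Once such a self-perpetuating rotation is exhibited, the non-convergence claim follows immediately.
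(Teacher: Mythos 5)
There is a genuine gap: your proposal is a construction plan, not a construction. For a statement of this kind (``dynamics may not converge''), the entire mathematical content is the explicit game together with a verified closed loop of strictly improving moves; you defer exactly these two items (``tuning the weights $W_u(\cdot)$ so that each disruption strictly outweighs the compensating gain'' and ``computing the least fixpoint of relayed topics''), and until they are carried out nothing is proved. Note also that the verification burden is lighter than you suggest: by the paper's definition, non-convergence only requires exhibiting \emph{one} infinite sequence of selfish moves, so each designated move need only be strictly improving for the mover --- you do not need the loop to consist of best responses, and you need not rule out other profitable deviations. But you still must write down concrete weights and check the (at least) six inequalities around your three-user ring, which you never do.

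Moreover, your structural claim that ``at least three users are needed'' is false, and the paper's own proof refutes it. The paper uses exactly two strategic users $u_1,u_2$ who mutually follow each other and each toggle a single third link between relay nodes $q_i$ and $r_i$ carrying the bundles $q_1:\{x,y\}$, $r_1:\{k,l\}$, $q_2:\{x,k\}$, $r_2:\{y,l\}$ (Figure~\ref{fig:het_instab}). With the weights of Table~\ref{tab:het} ($u_1$ values $y,k$ at $1$ and $x,l$ at $\epsilon$; $u_2$ the reverse), the induced $2\times 2$ subgame is matching pennies: $u_1$ profits from matching indices, $u_2$ from mismatching, and every move in the cycle $(q_1,q_2)\rightarrow(q_1,r_2)\rightarrow(r_1,r_2)\rightarrow(r_1,q_2)\rightarrow(q_1,q_2)$ improves the mover's utility by exactly $1-\epsilon$. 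Your argument for the insufficiency of two users only considers the symmetric coupling where each user decides whether to \emph{inject} a shared topic, which indeed yields a coordination game; you overlooked that with heterogeneous weights over four overlapping relay bundles one can make the two users' preferences anti-symmetric (zero-sum-like), which is precisely what kills any potential function. This choice also sidesteps your least-fixpoint worry entirely, since every topic originates at a producer and flows along an acyclic relay path. So while a three-user frustrated ring might plausibly be completed into a valid alternative proof, as written your argument both leaves the crux unproven and rests on an incorrect minimality claim.
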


\begin{proof}
Consider the following scenario with six retransmitting users $p_i, q_i, r_i$,
$i=1,2$, and two users $u_1, u_2$ each with degree $\Delta_i=3$.  The
retransmitting users publish sets of
topics as follows:  $p_1:\{a,b\}$, $p_2:\{c,d\}$, $q_1:\{x,y\}$,
$r_1:\{k,l\}$, $q_2:\{x,k\}$, $r_2:\{y,l\}$ .  The user-specific
values are
given in Table~\ref{tab:het}, where $\epsilon < 1/2$.   
As depicted in Figure~\ref{fig:het_instab}, each agent $u_i$ uses one connection
to follow user $p_i$ through whom he receives a total value of
$4$.  He also connect to the other user $u_{3-i}$ to receive another
topic of
value $2$ from $p_{3-i}$.  Now each user $u_i$ must select between
$q_1,q_2,r_1$ and $r_2$ for his third connection.  
We start with users $u_1$ and $u_2$
choosing $q_1$ and $q_2$ respectively. They thus receive
$8+\epsilon$  and $7+2\epsilon$  in total respectively.   User $u_2$ then
selects $r_2$ for receiving $l$ instead of $k$ and getting $8+\epsilon$.
This changes user $u_1$'s utility to $7+2\epsilon$.  
Then user $u_1$ can increase his utility by $1-\epsilon$, and does so by 
switching to $r_1$ for receiving $k$ instead of $x$. Now this decreases $u_2$'s
utility by $1-\epsilon$. This can indeed continue again and again as follows.
Denote the state of the system
by $(\mathcal{S}(u_1),\mathcal{S}(u_2))$ where $\mathcal{S}(u_i)$ is
user $u_i$'s strategy in selecting between $q_i$ and $r_i$.  
Under selfish moves, the
system may cycle as follows: $(q_1,q_2)$ $\rightarrow$ $(q_1,r_2)$ $\rightarrow$ $(r_1,r_2)$
$\rightarrow$ $(r_1,q_2)$ $\rightarrow$ $(q_1,q_2)$ $\rightarrow$
$(q_1,r_2)$ $\rightarrow \cdots$.
\end{proof}
\vspace{-0.5em}
\begin{figure}[htbp]
\centering
\includegraphics[width=\linewidth]{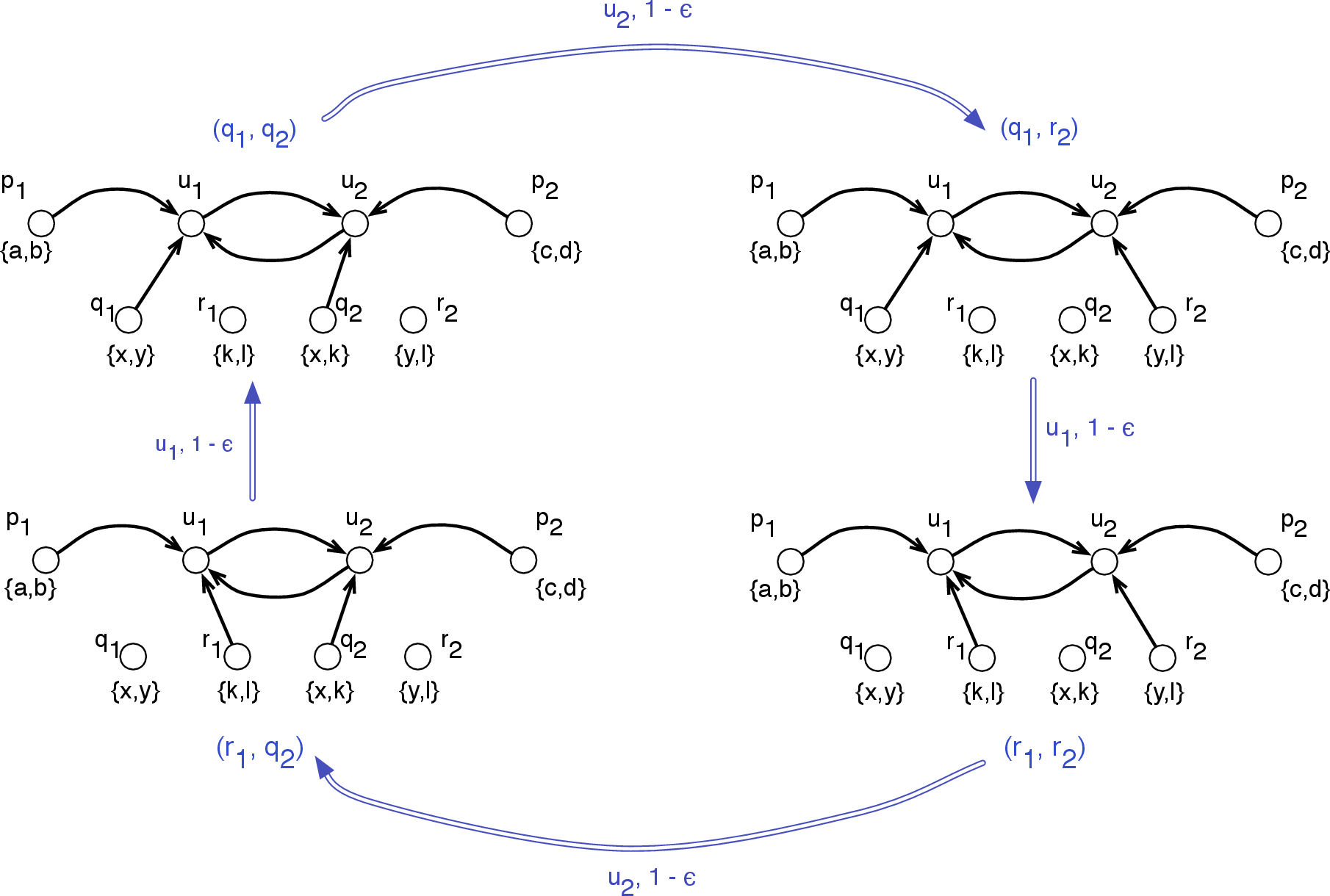}
\caption{Illustration of the proof of Proposition~\ref{prop:het.instability}.
  Instability with heterogeneous interest sets can arise with this
  4-cycle of selfish moves $(q_1,q_2)$ $\rightarrow$ $(q_1,r_2)$ $\rightarrow$
  $(r_1,r_2)$ $\rightarrow$ $(r_1,q_2)$ where users $u_1$ and $u_2$ change
  only one connection. The strategy of other users remains fixed. The set
  of subjects they receive is indicated.
  User-specific values for topics
  are those listed in Table~\ref{tab:het}. Each double arrow corresponds to a
  selfish move bringing from one configuration to another. It is labeled
  with the name of the user making the move and his utility variation.}
\label{fig:het_instab}
\end{figure}
\vspace{-0.5em}
\begin{table}[htbp]
\centering
\scalebox{1}{\bgroup\small
 \begin{tabular}{|l|c|c|c|c|c|c|c|c|}
\hline
   User\textbackslash Topic& \hspace{0.5em}$a$\hspace{0.5em} &
   \hspace{0.5em}$b$\hspace{0.5em} & \hspace{0.5em}$c$\hspace{0.5em} & \hspace{0.5em}$d$\hspace{0.5em} & \hspace{0.5em}$x$\hspace{0.5em} & \hspace{0.5em}$y$\hspace{0.5em} & \hspace{0.5em}$k$\hspace{0.5em} & \hspace{0.5em}$l$\hspace{0.5em} \\ \hline
$u_1$ & 2 & 2 & 2 & 0 & $\epsilon$ & 1 & 1 & $\epsilon$ \\ \hline
$u_2$ & 2 & 0 & 2 & 2 & 1 & $\epsilon$ & $\epsilon$ & 1 \\ \hline
 \end{tabular}
\egroup}
\caption{User-specific values for topics used in the proof of 
Proposition~\ref{prop:het.instability}.}
\label{tab:het}
\end{table}

With an arbitrary structure of user interest sets, we have thus shown
that the price of anarchy may be unbounded, and  dynamics may
not converge. The question of determining if
pure Nash equilibria exist is left open.

\section{Structured interest sets}
\label{sec:doubling}

We now revisit the efficiency of social filtering in an heterogeneous scenario, where interest sets are no longer arbitrary but instead are organized according to a well behaved geometry. Specifically we assume the following model.
A metric $d$ is given on a set $P'\supseteq P$ of subjects.  The interest set
$S_u$ of each user $u$ then coincides with a {\em ball} $B(s_u,R_u)$
in this metric, specified by a \emph{central subject} $s_u$ and a
\emph{radius of interest} $R_u$.  
We assume that the value of a subject
to a user is non-increasing in its distance from $s_u$.
Specifically, we assume $W_u(s)=f(d(s_u,s))$ for $d(s_u,s) \le R_u$, where $f(\cdot)$ is a
non-increasing positive function, and $W_u(s)=0$ otherwise.   
Without loss of generality, we
can assume $P'=\set{s_u : u\in V}\cup P$ and $S_u=B(s_u,R_u)\cap P$.
We shall first give conditions on the
metric $d$ and the sets $S_u$ under which an efficient configuration
exists. We will then introduce modified dynamics and filtering rules
which guarantee stability, i.e. convergence to an equilibrium.
A flow game where interest sets can be defined in this way
is called a \emph{metric flow game}.

The model can easily be generalized to more eclectic user interests
where topics a user is interested in correspond to the union of a
constant number of balls. We leave out the details of such
generalizations so as to keep the focus of the paper.  However, we
include a brief discussion later in the section, in the context of
Proposition~\ref{prop:doubling}.

\subsection{Sufficient conditions for optimal utility}

Consider the following properties of the interest set geometry.
\begin{enumerate}
\item\label{doubling} $\g$-doubling: $d$ is $\g$-doubling, i.e.
  for any subject $s$ and radius $R$, the ball $B(s,R)$ can be
  covered by $\g$ balls of radius $R/2$: there exists
  $I\subset S$ such that $\card{I}\le \g$ and $B(s,R)\subset
  \cup_{t\in I} B(t,R/2)$.
\item\label{covering} $r$-covering: $r$ is a covering radius, i.e. for each subject
$s \in P$ there is a user $u$ such that $d(s_u,s) \le r$ and $R_u \ge r$.
\item\label{sparsity} $(r,\d)$-sparsity: there are at most $\d$ subjects within
  distance $r$: $\card{B(s,r)}\le \d, \forall s$.
\item\label{smoothness} $r$-interest-radius smoothness: for any users $u,v$ with 
  $d(s_u,s_v) < 3R_u/2+r$, we have $R_v\ge R_u/2+r$ and $R_u\ge R_v/2+r$
  (users with similar interests have comparable interest radii). 
\end{enumerate}

Property~(\ref{doubling}) is a classical generalization of dimension
from Euclidean geometry to abstract metric spaces (an Euclidean space
with dimension $k$ is $2^{\Theta(k)}$-doubling). This is a natural assumption
if user interests can be modeled by proximity in a hidden low-dimensional space.
Property~(\ref{covering}) states that all subjects are within distance
$r$ from some user's center of interest and can thus be seen as
an assumption of minimum density of users' interests over the whole
set $P$ of available subjects.
Property~(\ref{sparsity}) puts an upper bound on the density of
subjects. In other words, we assume a level of granularity under which
we do not distinguish subjects.
Property~(\ref{smoothness}) is another form of smoothness assumption,
requiring that the radii of interest of nearby users do not differ too much.
This property is obviously satisfied if we assume that all users have
same radius of interest. In general,
it may seems debatable if we think of an expert 
next to an amateur. However, if we assume that a topic is split into several
subjects according to the level of expertise required to understand
the corresponding news, the assumption becomes
more natural as an expert is still interested in related subjects (with lower
level of understanding) and an amateur still has some focus if the correct
number of levels is considered.

We now show that an optimal solution exists, i.e. one in which each user
receives all subjects in his interest set, as each user $u$ has 
budget of attention at least 
$\g\d+\g^2\log\frac{R_u}{r}$. 

\begin{proposition}
\label{prop:doubling}
Consider a metric flow game satisfying
the $\g$-doubling, $r$-covering, $(r,\d)$-spar\-sity
and $r$-interest-radius smoothness
assumptions. If in addition each user $u$ has a budget of attention
at least $\g\d+\g^2\log\frac{R_u}{r}$, then there exists a collection
of user strategies allowing each user
$u$ to receive all subjects in $S_u$.
\end{proposition}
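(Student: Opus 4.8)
The plan is to exhibit an optimal configuration by building, for each user, a hierarchical aggregation structure over a global dyadic grid of radii $\rho_j = r\,2^j$. I would introduce \emph{aggregators} indexed by pairs (user, scale): the scale-$j$ aggregator based at a user $b$ is in charge of collecting every subject lying in the ball $B(s_b,\rho_j)$, and it will be realized by a set of in-edges of $b$. A user $u$ of radius $R_u$ is then its own top aggregator, responsible for its interest ball $B(s_u,R_u)=S_u$ (taken at the grid level $\ell_u=\lceil\log(R_u/r)\rceil$), so that receiving all of $S_u$ amounts to realizing this single aggregator down to the producers.

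To construct the scale-$j$ aggregator at $b$, I would apply $\gamma$-doubling twice to cover $B(s_b,\rho_j)$ by $\gamma^2$ pieces of radius $\rho_{j-2}=\rho_j/4$; in each piece I pick, by the $r$-covering property, one user $b'$ whose centre lies within $r$ of the piece, and I make $b'$ responsible for the scale-$(j-1)$ ball $B(s_{b'},\rho_{j-1})$. Since $r\le\rho_{j-2}$ at these scales, each radius-$\rho_{j-2}$ piece is contained in $B(s_{b'},\rho_{j-1})$, so the $\gamma^2$ chosen children cover $B(s_b,\rho_j)$; connecting $b$ to all of them costs $\gamma^2$ in-edges at this scale and halves the radius. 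At the smallest scales (radius $O(r)$) the ball holds at most $\gamma\delta$ subjects by $(r,\delta)$-sparsity together with one doubling, so the aggregator connects to them directly, terminating the recursion after $\log(R_u/r)$ levels. Working on the fixed grid guarantees that a user is responsible for at most $\log(R_u/r)$ distinct scales (only those $\rho_j\le R_u$), and the $\gamma^2$ in-edges realizing one scale are built once and reused by every higher aggregator that taps it; hence each user's in-degree is at most $\gamma\delta+\gamma^2\log(R_u/r)$, within the assumed budget.

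The step I expect to be the main obstacle is guaranteeing that these relays are \emph{genuine users interested in everything they must retransmit}, so that the abstract hierarchy can actually be populated; this is exactly where $r$-interest-radius regularity is used. I would maintain the invariant that an aggregator responsible for a ball of radius $R$ is based at a user $b$ with $R_b\ge R+r$, which holds at the root since $u$ is responsible for $B(s_u,R_u)$ itself. When selecting a child $b'$ near a piece, a direct estimate gives $d(s_b,s_{b'})\le R+r<3R_b/2+r$, so regularity yields $R_{b'}\ge R_b/2+r\ge R/2+r$; thus $b'$ is interested in the entire radius-$(R/2)$ ball it is made responsible for, and the invariant is reproduced one scale down. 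This single mechanism does double duty: it ensures every relay genuinely retransmits (it is interested in all it aggregates) and that the recursion keeps descending by halving the radius until sparsity applies.

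Finally, I would deduce correctness directly from the definition of \emph{receiving}, without requiring the global graph $G(F)$ to be acyclic. Fix a user $u$ and a subject $s\in S_u$: following, from the top down, the aggregation balls that contain $s$ produces a chain of aggregators $c_{\ell_u},\dots,c_1$ of strictly decreasing scale, with $c_{\ell_u}$ directly connected to the producer of $s$ and each $c_i$ interested in $s$ (since $s$ lies in its aggregation ball $\subseteq S_{c_i}$). Reversing this chain is a directed path from $s$ to $u$ whose intermediate nodes are all interested in $s$, so $u$ receives $s$. As the definition only asks for the \emph{existence} of one such path, the cycles that may arise from different users sharing relays are harmless. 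This produces the required collection of strategies in which every user receives all of $S_u$, proving the proposition.
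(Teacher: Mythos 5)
Your construction is essentially the paper's (double application of $\gamma$-doubling to get $\gamma^2$ pieces of quarter radius, $r$-covering to select relay users, regularity to lower-bound their interest radii, direct producer links at the bottom scale, and the $\gamma\delta+\gamma^2\log\frac{R_u}{r}$ accounting), but there is a genuine gap in the invariant that is supposed to make the relays legitimate. As stated, the invariant ``an aggregator responsible for radius $R$ sits at a user with $R_b\ge R+r$'' already fails at the root, where it would require $R_u\ge R_u+r$; more importantly, it fails to propagate through the first descent because you snap children to the \emph{full} dyadic radius $\rho_{\ell_u-1}$. Since $\ell_u=\lceil\log (R_u/r)\rceil$, one only has $\rho_{\ell_u-1}\in[R_u/2,\,R_u)$, while regularity guarantees no more than $R_{b'}\ge R_u/2+r$ for the chosen child. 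Concretely, take $R_u=4.1r$: then $\ell_u=3$ and the child is made responsible for $B(s_{b'},4r)$, yet $R_{b'}$ may be as small as $3.05r$. Such a relay is not even \emph{interested} in part of its assigned ball, so the expertise/filtering step in your correctness argument (``$s$ lies in its aggregation ball $\subseteq S_{c_i}$'') breaks: subjects in the annulus between $R_{b'}$ and $\rho_{\ell_u-1}$ cannot be retransmitted by $b'$. The same mismatch undermines your degree accounting, since a user can be tapped for a scale $\rho_j>R_{b'}$, i.e.\ a scale outside the ``only those $\rho_j\le R_{b'}$'' range on which your in-degree bound rests.

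The paper's proof avoids exactly this by truncating every responsibility ball at the user's own radius: it works with $B_{v,i-1}=B\paren{s_v,\min(R_v,2^{i-1}r)}$ and verifies, via regularity, the quantitative containment $\min(R_v,2^{i-1}r)\ge 2^{i-2}r+r$, which is all that is needed for the truncated ball to contain the quarter-radius piece $B(s,2^{i-2}r)$ (here the key auxiliary inequality is $2^{i-1}r<R_u$ at the top scale, so that $2^{i-2}r<R_u/2$ and $R_v\ge R_u/2+r>2^{i-2}r+r$). With truncation, a relay is by construction interested in everything it aggregates, every scale it realizes belongs to its own hierarchy (so the budget bound $\gamma\delta+\gamma^2\log\frac{R_v}{r}$ holds), and the top-down induction goes through. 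Your argument can be repaired by adopting this truncation and replacing your invariant with the containment check above; everything else in your write-up (the dyadic recursion, the base case via one doubling plus $(r,\d)$-sparsity, and the observation that cycles in $G(F)$ are harmless because only the existence of one good path is needed) matches the paper and is sound.
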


This result can easily be extended to the case where each user interest set is 
given by a union of balls (the number of balls being at most a 
constant $b$). It suffices to repeat the construction of the proof
for each ball, resulting in a factor $b$ in the resulting required 
budget of attention.
The assumptions have to be slightly modified
so that any subject is covered by some ball of a user (in the covering
assumption) and that two nearby balls have comparable radii (in the
smoothness assumption).  

\medskip
\begin{proof}
We define the ball $B_{u,i}:=B(s_u,\min(R_u,2^ir))$ for each user $u$ and each integer $i \ge 0$. 
The construction to follow will ensure that $u$ collects all subjects in $B_{u,i}$
through  a set $N_{u,i}$ of contacts such that 
$B_{u,i}\subset \cup_{v\in N_{u,i}} B_{v,i-1}$.

We first define $N_{u,1}=\set{p_s: s\in B_{u,1}}$.
According to the $\g$-doubling assumption, $B_{u,1}$ can be covered
by $\g$ balls of radius $r$. As the $(r,\d)$-sparsity implies that each
of these balls contains $\d$ subjects at most, the size of $N_{u,1}$
is upper bounded by $\g\d$.

Now, for $2\le i\le \ceil{\log\frac{R_u}{r}}$, the $\g$-doubling
assumption implies that $B_{u,i}$ can be covered by at most 
$\g^2$ balls of radius $2^{i-2}r$: there exists a set $L_{u,i}$ of at most
$\g^2$ subjects such that $B_{u,i}\subset \cup_{s\in L_{u,i}}
B(s,2^{i-2}r)$. From the $r$-covering assumption, we can then define
a set $N_{u,i}$ of at most $\g^2$ users such that each $s\in L_{u,i}$ is
at distance at most $r$ from some $s_v$ with $v\in N_{u,i}$.
We then have $B_{u,i}\subset \cup_{v\in N_{u,i}} B(s_v,2^{i-2}r+r)$.
Without loss of
generality, we can assume that for each $s\in L_{u,i}$, $B(s,2^{i-2}r)$
intersects $B_{u,i}$ (otherwise $s$ can safely be removed from 
$L_{u,i}$ as it does not cover anything useful). We thus have
$d(s_u,s)\le R_u+2^{i-2}r < 3R_u/2$ (note that $2^{i-1}r<R_u$
as $i\le \ceil{\log\frac{R_u}{r}}$). 
For $v\in N_{u,i}$ such that $d(s,s_v)\le r$, we then
have $d(s_u,s_v) < 3R_u/2+r$. From the $r$-interest-radius
smoothness, we then deduce $R_v\ge R_u/2+r>2^{i-2}r+r$, implying 
$\min(R_v,2^{i-1}r)\ge 2^{i-2}r+r$. The ball $B_{v,i-1}$ thus
contains $B(s_v,2^{i-2}r+r)\supset B(s,2^{i-2}r)$. 
Together with the definition of $L_{u,i}$,
this proves $B_{u,i}\subset \cup_{v\in N_{u,i}} B_{v,i-1}$.

The connection graph $G$ results from connecting each user $u$ to all contacts in the set 
$\cup_{1\le i\le\ceil{\log\frac{R_u}{r}}} N_{u,i}$.

\paragraph{Flow correctness:}
We show by induction on $i$ that each user $u$ receives all subjects in $B_{u,i}$. The direct connection to producers for
subjects in $B_{u,1}$ ensures this for $i=1$. For $i>1$, the
induction hypothesis implies that each user $v\in N_{u,i}$ receives
all subjects in $B_{v,i-1}$. From 
$B_{u,i}\subset \cup_{v\in N_{u,i}} B_{v,i-1}$, we conclude that $u$
will receive news about subjects in $B_{u,i}$ from its contacts in
$N_{u,i}$. As $S_u=B_{u,\ceil{\log\frac{R_u}{r}}}$, we finally know that $u$
receives all subjects in $S_u$.

\paragraph{In-degree bound:}
First, we have $\card{N_{u,1}}\le \g\d$. This comes from the fact
that $B_{u,1}$ is included in at most $\g$ balls of radius $r$ from
the $\g$-doubling assumption, and each of these balls contains at
most $\d$ subjects from the $(r,\d)$-sparsity assumption.
Second, we have already seen that $\card{N_{u,i}}\le\g^2$ for
$2\le i\le \ceil{\log\frac{R_u}{r}}$. We thus obtain the bound
$\g\d+\g^2\paren{\ceil{\log\frac{R_u}{r}}-1}<\g\d+\g^2\log\frac{R_u}{r}$.
\end{proof}

The core of the construction consists in covering a given ball radius of $2^ir$
with a set of $\g$ balls of radius $2^{i-1}r$.
As a covering set of $\g^2$ balls
can be computed through a simple greedy covering
algorithm~\cite{DBLP:conf/compgeom/Har-PeledM05}, a solution where
the required budget of attention is within a factor $\g$ 
from the bound of Proposition~\ref{prop:doubling} can thus
be computed in polynomial time. 

Note that a logarithmic number of connections allows to gather a polynomial number of subjects.
As previously mentioned, a budget of attention of
$\Delta=\g\d+\g^2\log\frac{R_u}{r}$ for each user $u$ is enough for
maximum utility. On the other hand, the number of subjects in $B(s_u,R_u)$
can be polynomial in $R_u$. For example, if the subjects are placed regularly
in a $d$ dimensional lattice, it would be in the order of $R_u^d$
(the doubling assumption ensures that it is at most polynomial). 
A logarithmic number of connections is thus sufficient
for gathering the subjects interesting a user.
Thus this configuration gives substantial savings in
comparison to one where users would connect directly to all their
subjects.

Clearly the configuration graph identified in this theorem is 
an equilibrium: as maximum
utility is reached, no user can increase its utility by
reconnecting.  We now study conditions that guarantee convergence of 
dynamics.

\subsection{Sufficient conditions for stability}

We first define two rules regarding republication of subjects received and reconnections.
\begin{enumerate}
\item\label{expertise} Expertise-filtering rule:
  when a user $u$ is connected to a user $v$, $u$ only receives
  subjects $s$ such that $d(s_v, s) \le d(s_u, s)$. 
\item\label{priority} Nearest-subject rule for re-connection:
  when reconnecting, each user $u$ gives priority to subjects that are
  closer to $s_u$: a new subject $s$ is gained by $u$ so that no subject
  $t$ with $d(s_u,t) \le d(s_u,s)$ is lost. (On the other hand,
  any subject $t$ with $d(s_u,t) > d(s_u,s)$ can be lost.)
\end{enumerate}

Rule~\ref{expertise} can be interpreted as follows. The center of expertise of
a user is the same as its center of interest, and the distance $d$ also captures expertise of users about subjects, in that $u$ is more expert than $v$ on subject $s$ if and only if $d(s_u,s)\le d(s_v,s)$. The rule then amounts to a sanity check where $u$
discards news from sources that have less expertise than himself on the
subject. 
We capture this with the following slight variation of the model.
A flow game \emph{with expertise-filtering} is a flow game where
reception of a subject $s$ by user $u$ occurs only when there exists
a directed path $s=u_0,\ldots,u_k=u$ from $s$ to $u$ such
that for each $1\le i< k$, $s\in S_{u_i}$ (i.e. $d(s_{u_i},s)\le R_{u_i}$) 
and $d(s_{u_i},s)\le d(s_{u_{i+1}},s)$.

Rule~\ref{priority} states that a user $u$ prefers to receive a
subject he is more interested in (i.e. closer to $s_u$) rather than
any number of subjects that are less interesting.  A flow game is denoted to
be \emph{with nearest-subject priority} if the utility function of
each user $u$ is defined by
$U_u(F)=\max\set{R : u \text{ receives all } s\in B(s_u,R)}$.

\begin{proposition}
\label{prop:stable}
Any metric flow game with expertise-filtering and nearest-subject priority
has an ordinal potential function, implying that selfish dynamics always
converge to an equilibrium in finite time.
\end{proposition}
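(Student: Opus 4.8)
The plan is to mirror the lexicographic potential argument of Proposition~\ref{prop:hom.conv}, but to order users by their coverage radius $U_u$ rather than by a subject count. First I would observe that, since there are finitely many subjects, the distances $d(s_u,s)$ — and hence the possible values of the utility $U_u(F)=\max\set{R : u \text{ receives all } s\in B(s_u,R)}$ — range over a finite set $\rho_1<\cdots<\rho_M$. Let $n_\ell$ denote the number of users with $U_u=\rho_\ell$ and consider the vector $(n_1,\dots,n_M)$. As in the homogeneous case, the function $-\sum_\ell n_\ell\,C^{M-\ell}$ (with, say, $C=n+1$) is then a bounded ordinal potential, provided that every selfish move strictly decreases $(n_1,\dots,n_M)$ in lexicographic order.

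The heart of the argument is to control how a single selfish move of a user $u$, raising $U_u$ from $R=\rho_a$ to $R'=\rho_b>\rho_a$, affects the other users. By the nearest-subject priority, after the move $u$ receives every subject in the ball $B(s_u,R')$; hence any subject $t$ that $u$ ceases to forward must satisfy $d(s_u,t)>R'$. Now let $w\neq u$ be a user that loses such a $t$ as a consequence of the move. Since $w$ obtained $t$ along a directed path through $u$, the expertise-filtering rule forces $d(s_u,t)\le d(s_w,t)$, so $d(s_w,t)>R'$ as well. Thus every subject lost by any user lies strictly outside that user's ball of radius $R'$. Consequently $w$'s coverage radius is unchanged when $U_w\le R'$, and otherwise can only drop to a value that is still strictly greater than $R'$. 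In either case no user's utility decreases to a value $\le R'$, and in particular none decreases into a bucket $\rho_\ell$ with $\ell\le a$.

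It remains to deduce the lexicographic decrease. The move removes $u$ from bucket $a$. A user can enter a bucket only in two ways: by a utility \emph{decrease}, which by the previous paragraph lands strictly above $R'=\rho_b$, i.e.\ in a bucket of index $>b$; or by a utility \emph{increase}, which necessarily vacates a strictly lower bucket. A short minimality argument then shows that the lowest-index bucket whose count changes must change by a strict decrease: if the first affected bucket instead had its count rise, the responsible user would have left an even lower bucket (for an increase), or else that bucket would have index $>b\ge a$ while bucket $a$ has already lost $u$ — a contradiction in either case. Hence $(n_1,\dots,n_M)$ strictly decreases lexicographically at every selfish move, the potential strictly increases, and since it takes finitely many values the dynamics converge.

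The main obstacle is precisely the structural claim of the second paragraph, where the two special rules are indispensable. Nearest-subject priority guarantees that an improving user retains its entire inner ball and therefore sheds only far-away subjects, while expertise-filtering transfers this ``far-away'' property from the improving user's metric to that of every downstream user. Without either rule a downstream user could lose a nearby subject and fall into a low bucket, destroying the monotonicity — indeed Proposition~\ref{prop:het.instability} shows that in the unrestricted heterogeneous model no such potential can exist. Verifying that simultaneous gains by downstream users do not spoil the argument, via the minimality argument above, is the remaining technical point.
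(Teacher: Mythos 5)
Your proposal is correct and takes essentially the same route as the paper: both proofs hinge on the identical key lemma --- the nearest-subject rule (via the utility-as-radius definition) forces every subject the mover ceases to forward to lie at distance greater than the new radius $R'$ from $s_u$, and expertise filtering transfers this bound to each downstream loser's own metric --- and both convert it into a lexicographic ordinal potential over a finite set of distance values; the only difference is bookkeeping, since the paper bins \emph{pairs} $(u,s)$ by $d(s_u,s)$ (so gains only add pairs and losses only remove pairs at strictly larger indices, making the lexicographic increase immediate and avoiding your bucket-transition minimality argument), whereas you bin \emph{users} by coverage radius, mirroring Proposition~\ref{prop:hom.conv}. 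One harmless slip: a downstream user's utility can drop to a value equal to $R'$ (when the nearest lost subject sits at the distance value immediately above $R'$), not necessarily strictly greater, but since any such drop still lands in a bucket of index at least $b>a$, your minimality argument goes through unchanged.
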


The proof shows the existence of an ordinal potential function.  

\begin{proof}  
 Consider the set $\D=\set{d(s_u,s) : u\in V, s\in {P}}$ of all possible
 distances from the central subject of any user to any subject.  Let
 $r_1,\ldots,r_m$ denote all elements of $\D$ sorted in non-decreasing order
 (i.e. $r_1\le\cdots\le r_m$) with ties broken arbitrarily.  Let $n_i$ denote
 the number of pairs $(u,s)$ such that $d(s_u,s)=r_i$ and $u$ receives
 $s$. Consider the tuple $(n_1,\ldots,n_m)$. When a user $u$ makes a selfish
 move, he increases his utility by receiving a new subject $s$.  Let $i$ denote
 the index such that $d(s_u,s)=r_i$. Any lost subject $t$ must satisfy $d(s_u,t)
 > d(s_u,s)$ by the nearest-subject rule. If a lost subject $t$ was received by
 some user $v$ through a path from $u$ to $v$, we have $d(s_v, t) \ge d(s_u, t)$
 by the expertise-filtering rule.  We thus deduce $d(s_v,t) > d(s_u, s)$,
 implying that $n_j$ can decrease only for $j>i$.  The tuple $(n_1,\ldots,n_m)$
 thus increases according to the lexicographical order after any selfish move.
 As the size of $\D$ is at most $np$, the product $np$ is also a trivial
 upper bound on each $n_i$.  The tuple
 $(n_1,\ldots,n_m)$ can thus be read as a number in base $np$.  This number
 always increases under selfish moves, implying that 
 $\sum_{0\le i\le m}n_i\,(np)^{m-i}$ is a potential function.
 This potential function always increases until a local maximum
 is reached, proving convergence to an equilibrium.
\end{proof}


We can additionally
prove fast convergence under sufficient conditions for optimal utility
and fair dynamics under best response. 
We call \emph{best response} a move
where a user $u$ gets the best possible utility (with nearest-subject priority) 
given the current connections of other users. In other words, $u$
receives all subjects within distance $R$ from $s_u$ after the move
and no move could provide all subjects within distance $R'$
from $s_u$ with $R'>R$. 
Recall that a sequence is fair if it can be decomposed in a sequence of
rounds where each user has the opportunity to make a move during each round
as defined in Section~\ref{sub:convhom}. If in addition users only perform
best response moves, we say that the system
is under \emph{fair dynamics with best response}.

We are now ready to prove the following:
\begin{theorem}
\label{th:stab.opt}
Consider a metric flow game with expertise-filtering and nearest-subject priority
that satisfies the $\g$-doubling, $r$-covering, $(r,\d)$-sparsity
and $r$-interest-radius smoothness assumptions. If in addition each
user $u$ has budget of attention at least $\g\d+\g^2\log\frac{R_u}{r}$,
selfish dynamics converge to an equilibrium where
each user $u$ receives all subjects in $S_u$, implying that the
price of anarchy is then 1. Moreover, the system converges in at most
$\ceil{\log\frac{R_m}{r}}$ rounds under fair dynamics with best response
where $R_m$ is the maximum radius of interest over all users.
\end{theorem}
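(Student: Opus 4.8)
The plan is to combine Proposition~\ref{prop:stable} with the stronger claim that \emph{every} equilibrium of such a game is optimal. Proposition~\ref{prop:stable} already guarantees that selfish dynamics converge to some equilibrium, so it suffices to show that at any equilibrium each user $u$ receives all of $S_u$: then the optimum (which exists by Proposition~\ref{prop:doubling}) and the worst equilibrium both equal the fully-satisfied configuration, and the price of anarchy is $1$.

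To prove optimality of equilibria I would argue by contradiction. Suppose some equilibrium leaves a subject in an interest set unreceived, and among all pairs $(u,s)$ with $s\in S_u$ that $u$ does not receive, pick one minimizing $\rho^{*}:=d(s_u,s)$. By this minimality, in the current configuration every user $w$ receives every subject $t\in S_w$ with $d(s_w,t)<\rho^{*}$. I then exhibit a single rewiring of $u$ that strictly raises its nearest-subject utility. Let $i^{*}$ be the least integer with $\rho^{*}\le 2^{i^{*}}r$, so that $2^{i^{*}-1}r<\rho^{*}$ and $s\in B_{u,i^{*}}$, and have $u$ adopt the contact set $\bigcup_{1\le i\le i^{*}}N_{u,i}$ built by the covering construction of Proposition~\ref{prop:doubling}, truncated at level $i^{*}$. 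The counting of that proof bounds its size by $\gamma\delta+\gamma^{2}(i^{*}-1)<\gamma\delta+\gamma^{2}\log\frac{R_u}{r}$, which lies within $u$'s budget because $\rho^{*}\le R_u$ forces $i^{*}\le\ceil{\log\frac{R_u}{r}}$.

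The heart of the argument is to show that after this move $u$ receives all of $B_{u,i^{*}}$, hence $s$; since $B_{u,i^{*}}$ contains every subject within distance $\min(R_u,2^{i^{*}}r)\ge\rho^{*}$ of $s_u$, the nearest-subject utility of $u$ rises from below $\rho^{*}$ to at least $\rho^{*}$, the desired selfish improvement. I would follow the flow-correctness induction of Proposition~\ref{prop:doubling} level by level, using two facts. First, at each level $\ell\le i^{*}$ the contacts $v\in N_{u,\ell}$ need only deliver their sub-balls $B_{v,\ell-1}$, of radius at most $2^{\ell-1}r\le 2^{i^{*}-1}r<\rho^{*}$; every subject involved is thus within distance $<\rho^{*}$ of $s_v$ and so, by minimality of $\rho^{*}$, is received by $v$. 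Second, these subjects reach $u$ legally under expertise-filtering: a subject $t$ covered at level $\ell\ge 2$ by $v$ obeys $d(s_v,t)\le r+2^{\ell-2}r\le 2^{\ell-1}r$, whereas if $t$ sits in the annulus $B_{u,\ell}\setminus B_{u,\ell-1}$ then $d(s_u,t)>2^{\ell-1}r\ge d(s_v,t)$, so $v$ is strictly more expert and the filter passes $t$; subjects of $B_{u,i^{*}}$ nearer to $s_u$ are gathered at the appropriate lower level in the same way, and level $1$ uses direct producer links.

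The one genuine obstacle is that changing $F_u$ might destroy a delivery path on which a contact $v$ relied, namely one relaying through $u$ itself. Here expertise-filtering is essential: any path feeding a subject $t$ with $d(s_w,t)<\rho^{*}$ consists only of nodes at distance $\le d(s_w,t)<\rho^{*}$ from $t$, so if $u$ lies on such a path then $d(s_u,t)<\rho^{*}$ and $t\in B_{u,i^{*}}$. Running the level induction jointly over all users, ordered by distance to the subject, one checks that $u$ continues to receive every such near subject in the new configuration and can therefore still relay it, so no near reception is lost and every contact keeps its sub-ball. This joint distance-induction --- reconciling the preservation of other users' near receptions with the construction of $u$'s own reception --- is the main technical point; the remaining steps are exactly the bookkeeping already carried out for Proposition~\ref{prop:doubling}. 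The contradiction then shows every equilibrium is optimal, completing the proof.
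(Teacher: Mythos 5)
Your proposal is correct and follows essentially the same route as the paper's proof: pick an unsatisfied pair $(u,s)$ minimizing $d(s_u,s)$, have $u$ deviate to the covering construction of Proposition~\ref{prop:doubling} truncated at the level $i$ with $d(s_u,s)\le 2^i r$ (which fits the budget $\gamma\delta+\gamma^2\log\frac{R_u}{r}$ and passes expertise-filtering), conclude that any configuration missing an interest-set subject is unstable, and invoke Proposition~\ref{prop:stable} for convergence. The one place you go beyond the paper --- explicitly handling delivery paths that relay through $u$ itself, by noting that expertise-filtering forces any subject $u$ relays to satisfy $d(s_u,t)<d(s_u,s)$ and is therefore recovered at a lower level of the construction --- addresses a subtlety the paper's proof leaves implicit, and your resolution of it is sound.
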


\begin{proof}
Consider a configuration where some user $u$ does not receive some
subject $s$ in his interest ball. Such a pair $(u,s)$ is called an 
unsatisfied pair. Without loss of generality we
consider an unsatisfied pair $(u,s)$ with
smallest $d(s_u,s)$ among all unsastisfied pairs. 
Let $j$ be the smallest integer such that $d(s_u,s)\le 2^j r$ holds.
As in the construction of the proof of Proposition~\ref{prop:doubling}, 
user $u$ can then receive all subjects in $B_{u,i}=B(s_u,\min(R_u,2^ir))$
through connections to the nodes in some set $N_{u,i}$ as follows.
The set $N_{u,1}$ contains at most $\g\d$ producers: those within
distance $2r$ from $s_u$. For
$2\le i\le j$, the set $N_{u,i}$ contains at most $\g^2$ users
such that $B_{u,i}$ is included in $\cup_{v\in N_{u,i}}B_{v,i-1}$.
Following these users allows $u$ to receive all subjects in 
the ring $B_{u,i}\setminus B_{u,i-1}$. The reason is twofold. 
First, the  choice of $(u,s)$ ensures that every user $v$ receives 
all subjects in $B_{v,i-1}$ as this ball has radius 
$2^{i-1}r$ at most and $2^{i-1}r \le 2^{j-1}r < d(s_u,s)$ by the choice
of $j$.  Second, any subject $s\in B_{v,i-1}\setminus B_{u,i-1}$ 
where $v$ is a user in $N_{u,i}$ is received by $u$ according to 
expertise-filtering since $d(s_v,s)\le 2^{i-1}r$ and $d(s_u,s)>2^{i-1}r$.
Overall, $u$ can receive all subjects within distance $\min(R_u, 2^j r)$
including $s$.
Nearest-subject priority implies that the configuration is unstable as
long as $\Delta_u \ge \g\d+\g^2 (j-1)$ which is the case for 
$\Delta_u\ge \g\d+\g^2 \log\frac{R_u}{r}$ since $R_u\ge d(s_u,s) > 2^{j-1} r$.
Since the system must stabilize to some equilibrium according to
Proposition~\ref{prop:stable}, every user $u$ must receive all news about
subjects in $S_u$ in that stable configuration.

\paragraph{Convergence speed:}
Let $P_j$ denote the property that every user $u$ receives all subjects in
his ball of radius $\min(R_u,2^j r)$.
We show by induction on $j$ that $P_j$ is satisfied 
after the first sequence of best response moves constituting $j$ rounds.
Consider the first moves. The nearest-subject priority rule
ensures that each user $u$ receives all subjects 
in $B_{u,1}$ after his first move.
The reason is simply that his budget is sufficient to connect directly
to all producers in $B_{u,1}$ (recall that this ball has size at most
$\g\d$ as shown in the proof of Proposition~\ref{prop:doubling}).
A move later on by a user $v$ cannot alterate the reception of a subject
$s$ with $d(s_u,s)\le 2r$. This is an effect of the expertise filtering
rule: $u$ can be affected only when he receives $s$ by a path from $v$ to $u$
with $d(s_v,s)\le d(s_u,s)\le 2r$ according to expertise filtering
and we know that a best response move of $v$ ensures that $v$ will receive 
all subjects in $B(s_v,2r)$ after the move.
Property $P_1$ is thus satisfied as soon as every user has made a move
under best response, that is after the first round.
Now assume that $P_{j-1}$ is satisfied.
A move by user $u$ cannot incur the loss of a subject $s$ for
a user $w$ whose central subject $s_w$ is at distance at most $2^{j}r$ from $s$.
The reason is that if $w$ receives this subject through a path from $u$,
the expertise-filtering rule implies that $s_u$ is at distance at most
$2^{j}r$ from $s$ also. On the other hand, $P_{j-1}$ implies the necessary
conditions to apply the same argument as in the first part of the proof.
We can thus show that some move by user $u$ will allow him to receive
all subjects within distance $\min(R_u,2^j r)$.
As $u$ forwards this subject before the move, we have 
$d(s_u,s)\le \min(R_u,2^j r)$ and $u$ still forwards the subject after a best
response move. We can thus conclude that if a user $w$ receives all subject
within distance $2^{j}r$, he will continue to receive all of them along
the $j$th round. This implies in particular that
property $P_{j-1}$ thus remains satisfied along the round.
Additionally, a user $u$ receives all subjects in his ball of radius 
$\min(R_u,2^j r)$ after his first move in the $j$th round and this
is preserved during the sequel of the round.
We can thus conclude that $P_j$ is satisfied as soon as
the $j$th round is completed and remains satisfied afterwards.
This completes the proof by induction.
For $j=\ceil{\log\frac{R_m}{r}}$,
$P_j$ imply that every user receives all subjects in his interest ball.
The convergence time is thus at most $\ceil{\log\frac{R_m}{r}}$ rounds.
\end{proof}

\section{Budget of attention and cost of connections}
\label{sec:costs}

As a simplifying assumption, we have considered up to now
that filtering each connection had the same cost. We now discuss
how our work can be extended to reflect the fact that the connection $(v,u)$
from a user $u$ to a user $v$ depend on how the interests of $u$ and $v$
differ.
A simple idea would be to let the cost be an increasing function of the
number of uninteresting messages $v$ sends to $u$. However, this would
make the model much more complex as costs would then depend on the dynamics.
Moreover this would not reflect the reality where a link is usually 
established on a long term basis. We thus propose to better model the cost
of attention of a connection as an increasing function of the number
of uninteresting subjects $v$ may potentially bring to $u$, that is
$|S_v\setminus S_u|$. If we normalize the cost of connecting directly
to a producer to 1, a simple cost function for establishing link $(v,u)$
could be $c(v,u)=1+\alpha_u |S_v\setminus S_u|$ where $\alpha_u>0$ is some 
parameter comparing the cost of filtering an uninteresting subject for user $u$
to the cost of initiating a connection.

The model remains the same in the homogeneous case. In the heterogeneous
case, the negative results of Section~\ref{sec:het} remain valid in this
more complex model as we could expect. The example with high price of
anarchy given in Figure~\ref{fig:het.poa.int} can be modified so that
the cost of connecting to any user is the same and the bad equilibrium
configuration remains stable (it suffices to add $(n/2-1)-(2\Delta-3)$ 
subjects for each pair $a_i,b_i$ of users that interest both of them and 
no one else). In the non-convergence example of Figure~\ref{fig:het_instab}, 
the two users that oscillate
between two strategies are basically interested in the same subjects and
they oscillate between users bringing only interesting subjects.
The possibility of non-convergence thus remains valid also.

Pushing forward the idea, we can assume that a user tends to accept a
certain fraction of uninteresting content compared to interesting content.
This could be modelled by setting $\alpha_u = \frac{\beta}{|S_u|}$ for some
constant $\beta > 0$. Additionally, there is no reason for counting several
times a subject that is brought by several connections (micro-blogging
systems can automatically eliminate duplicates). 
We can thus estimate globally the cost of the set of connections $F_u$
made by user $u$ as:
$$
c(F_u) = |F_u| + \beta \frac{\card{\cup_{(v,u)\in F_u}S_v \setminus S_u}}{\card{S_u}}
$$

Our model with structured interest sets naturally fits with this kind of
cost if we make a slightly stricter assumption on the metric, namely that it has
bounded growth. Given a constant $\gamma'>1$,
a metric is $\gamma'$-growth-bounded if for any point $s$ and
radius $R$, the ball $B(s,2R)$ is larger than $B(s,R)$ by a
multiplicative factor of $\gamma'$ at most. This is indeed a special case
of doubling metric and still generalizes Euclidean metrics.
The expertise-filtering rule implies that a user $u$ follows users at distance
at most $2R_u$. We can adapt the interest-radius smoothness 
assumption by requiring that for any users $u,v$ with
$d(s_u,s_v)\le 4 \max(R_u,R_v)$, we have $R_v\ge R_u/2$ and $R_u\ge R_v/2$. 
This setting thus implies $\cup_{(v,u)\in F_u}S_v\subseteq B(u,4R_u)$,
and the size of this ball is at most $\gamma'^2$ times larger than 
$B(u,R_u)=|S_u|$ by the $\gamma'$-growth-bounded hypothesis.
We thus get $c(F_u)\le |F_u| + \beta {\gamma'}^2$.
Expertise-filtering and smoothness assumptions on the metric modeling the
interests thus imply that the cost term for the filtering of uninteresting
content is upper-bounded by a constant. 
The results presented in Section~\ref{sec:doubling} thus still
apply up to the corresponding additive term in the budget of attention bounds.
We thus see that this finer model gives
another justification to expertise-filtering. This may indeed reveal that
the cost of filtering may naturally induce an incentive for applying
expertise-filtering.

\section{Concluding remarks}
\label{sec:conc}


We have shown that a flow game can have complex dynamics that may not
converge. 
However, we can prove convergence to efficient equilibrium for both
homogeneous flow games (with very weak assumptions) and metric
flow games (with more technical assumptions).
While our proofs give exponential bounds on convergence time in general,
we get linear convergence time up to a logarithmic factor
(in number of moves) for structured
interest set with expertise-filtering and nearest-subject priority,
showing that understanding the structure of interests and its relation
to forwarding mechanisms is a key aspect of information flow in social networks.
Direct follow up of this work concerns the study of the speed of
convergence in general and the characterization of flow games having 
pure Nash equilibria.

%

A dual variant of our model could be to consider that every user
gathers all the subjects he is interested in while he tries
to minimize the required cost of attention. We could also mix both
models, using utility functions combining coverage of interest set
and cost of attention (the function being increasing in the
number of interesting subjects received and decreasing in the costs
of attention of the formed links).
Another interesting variant resides in considering the size of flows
or equivalently their rate of news. The budget of attention required
to follow a flow should then increase accordingly to its size. 
This variant is complementary to weighting flows as a flow with more news
might be weighted higher by users wishing to follow it.


In that context, we believe the two following directions
are promising for efficient social dissemination. First, incentive
mechanisms, e.g. reputation counters maintained by users, or payments
between users, may be a complementary approach to augment the performance of
self-organizing social flows. Second, more elaborate content filtering
between contact-follower pairs may also lead to substantial
improvements. We have already introduced expertise filtering, which
could translate into implementable mechanisms in existing social
networking platforms. More generally there appears to be a rich design
space of filtering rules based on combinations of interests and
expertise.

\bibliographystyle{abbrv}
\bibliography{flow_social}

\end{document}